\documentclass[a4paper,UKenglish,cleveref, autoref, thm-restate]{lipics-v2021}
\usepackage{standalone}
\usepackage{mathtools}
\usepackage{multirow}
\usepackage{amsmath,amsfonts,amsthm}
\usepackage{tikz}
\usepackage{comment}
\usetikzlibrary{intersections}
\usetikzlibrary{calc}
\usetikzlibrary{arrows}
\usepackage{dsfont}
\usepackage[ruled,vlined, linesnumbered]{algorithm2e}

\SetCommentSty{mycommfont}
\SetAlFnt{\small}
\newtheorem{prop}{Proposition}

\SetKwComment{Comment}{/* }{ */}

\title{A Real-time Calculus Approach for Integrating Sporadic Events in Time-triggered Systems}

\author{Ana\"{i}s Finzi}{TTTech Computertechnik AG, Austria}{anais.finzi@tttech.com}{}{}

\author{Silviu S. Craciunas}{TTTech Computertechnik AG, Austria}{silviu.craciunas@tttech.com}{}{}

\author{Marc Boyer}{ONERA / DTIS, Universit\`{e} de Toulouse, France }{marc.boyer@onera.fr}{orcid.org/0000-0003-0344-6991}{}

\authorrunning{A. Finzi, S.S. Craciunas, M. Boyer} 
\titlerunning{A Real-time Calculus Approach for Integrating Sporadic Events in TT Systems}
\Copyright{Ana\"{i}s Finzi, Silviu S. Craciunas, Marc Boyer} 

\begin{CCSXML}
<ccs2012>
   <concept>
       <concept_id>10010520.10010575</concept_id>
       <concept_desc>Computer systems organization~Dependable and fault-tolerant systems and networks</concept_desc>
       <concept_significance>500</concept_significance>
       </concept>
 </ccs2012>
\end{CCSXML}

\ccsdesc[500]{Computer systems organization~Dependable and fault-tolerant systems and networks}

\keywords{time-triggered, event-triggered, scheduling, real-time, real time calculus}

\newcommand{\nn}[1]{\left[#1\right]^+}            % Non Negative
                % Non Decreasing
\newcommand{\nnup}[1]{\left[#1\right]^+_\uparrow} % Non Negative Non Decreasing

  % Require dsfont

\newcommand{\hDev}{\textit{hDev}}

%%%%%%%%%%%%%%%%%%%%%%%%%%%%%%%%%%%%%%%%%%%%%%%%%%%%%%%%%%%%%%%%%%%%%%
%% Definition of math macros

\newcommand{\set}[1]{\left\{{#1}\right\}}
\newcommand{\isdef}{\ensuremath{\overset{\textit{def}}{=}}}

 % such that

\newcommand{\card}[1]{\left|{#1}\right|}

%%%%%%%%%%%%%%%%%%%%%%%%%%%%%%%%%%%%%%%%%%%%%%%%%%%%%%%%%%%%%%%%%%%%%%
%% Definition of common sets

\newcommand{\nnR}{\ensuremath{\mathbb{R}^+}}

\usepackage{xspace}
\newcommand{\ie}{\emph{i.e.}\xspace}

\newcommand{\mt}{\textit{mt}}

\hideLIPIcs
\begin{document}
\nolinenumbers
\maketitle

\begin{abstract}
In time-triggered systems, where the schedule table is predefined and statically configured at design time, sporadic event-triggered (ET) tasks are handled within specially dedicated slots or when time-triggered (TT) tasks finish their execution early. We introduce a new paradigm for synthesizing TT schedules that guarantee the correct temporal behavior of TT tasks and the schedulability of sporadic ET tasks with arbitrary deadlines. The approach first expresses a constraint for the TT task schedule in the form of a maximal affine envelope that guarantees that as long as the schedule generation respects this envelope, all sporadic ET tasks meet their deadline. The second step consists of modeling this envelope as a burst limiting constraint and building the TT schedule via simulating a modified Least-Laxity-First (LLF) scheduler. Using this novel technique, we show that we achieve equal or better schedulability and a faster schedule generation for most use-cases compared to simple polling approaches. Moreover, we present an extension to our method that finds the most favourable schedule for TT tasks with respect to ET schedulability, thus increasing the probability of the computed TT schedule remaining feasible when ET tasks are later added or changed.
\end{abstract}

\section{Introduction}
\label{sec:introduction}
Time-triggered systems have been used to a great extent in the aerospace domain where the safety-critical nature of the applications imposes a certain level of determinism on the architecture, especially when certification is required~\cite{10.1145/2997465.2997492, DEUTSCHBEIN2019102}. Moreover, the automotive sector has recently seen a push towards centralizing functionality onto a more scalable and flexible integrated platform (c.f.~\cite{niedrist}) in order to support the complex real-time needs of, e.g., ADAS subsystems~\cite{Fleming2015, McLeanFRONTIERS22} and to allow a mixed-criticality paradigm. Thus, the use of time-triggered scheduling (cyclic executive) solutions leading to more deterministic (and thus more easily verifiable and certifiable) systems is also gaining importance in the automotive domain~\cite{6165039, Sagstetter:2014:SIF:2593069.2593211, ernst_et_al:DR:2018:9293}. In particular, the complex jitter and multi-rate cause-effect requirements found in ADAS applications~\cite{7579951, Becker:2017:ETA:3165725.3165925} cannot be easily guaranteed off-line using classical fixed- or dynamic-priority approaches and necessitate a more predictable time-triggered architecture (TTA)~\cite{McLeanFRONTIERS22}. While TTA has many benefits in terms of predictability, stability, compositionality, and determinism, the use of a static schedule table is notoriously inefficient at integrating sporadic event-driven tasks (ET). Conversely, pure event-triggered systems suffer from many drawbacks compared to a time-triggered execution, e.g., high jitter and starvation (c.f~\cite{10.1016/j.sysarc.2019.101652, 10.1023/A:1008198310125, 10.1007/BF00365463, 6823162, isovic09}). Modern safety-critical systems benefit most from combining the two paradigms, allowing a time-triggered system to be flexible enough to respond to sporadic events when needed. 

For time-triggered systems, where the schedule table is predefined and statically configured at design time, sporadic event-triggered (ET) tasks can only be handled within specially allocated slots or when time-triggered (TT) tasks finish their execution earlier than their worst-case assumption. While there is a significant body of work (c.f.\cite{10.1145/3431232} for an extensive survey) concerning pure time-triggered schedule generation, which is an NP-complete problem, most of the methods do not consider the schedulability of sporadic ET tasks. Traditionally, the integration of sporadic ET tasks in time-triggered systems is either done via a feedback loop integrated into the TT schedule generation mechanism~\cite{tpop2002, tpop2003}, or via hierarchical scheduling~\cite{10.1145/1017753.1017772, Shin2003, Shin2008, 1019197}. For both approaches, the computational effort (on top of the complexity of creating TT schedules) can be significant due to the response time analysis for each variation of TT slot placement or due to solving the server design problem within the TT schedulability space. Therefore, the challenge is to create static schedule tables for which both TT and ET tasks respect their deadlines while keeping the computational effort low.

We present a novel approach in which we first compute a maximal affine envelope (defined by a maximum burst and a rate) for the TT tasks in the system, such that as long as a TT schedule respects this envelope, all sporadic ET tasks meet their deadline. The second step involves expressing this envelope as a burst limiting constraint on the TT schedule and building the static schedule table via simulating a modified Least-Laxity-First (LLF) scheduler. Using this novel technique, we trade-off complexity for exactness via the pessimism of the affine envelope approximation resulting in a faster schedule generation while still achieving equal or better schedulability compared to the simple polling approach. Moreover, this method enables an efficient design optimization technique for iterative design processes where ET tasks are added or changed later. Our contributions are, therefore:
\begin{itemize}
\item a new and efficient approach based on affine envelope approximations for guaranteeing the schedulability of ET tasks with arbitrary deadlines without the need for complex response-time analysis,
\item a novel LLF-based algorithm that respects the affine envelope (expressed as a burst limiting constraint) and produces a static schedule guaranteeing both TT and ET deadlines,
\item a computationally ``cheap'' method for integrating ET tasks in TT systems that, while being pessimistic for some task sets, has in most cases equal or better schedulability and runtime results compared to other methods based on the traditional polling approach,
\item a design optimization where we maximize the solution space for changing or adding ET tasks without modifying the existing TT schedule generated via our method.
\end{itemize}

We start by introducing some necessary preliminaries in Section~\ref{sec:preliminaries}, followed by a review of related literature in Section~\ref{sec:related_work}. We introduce two polling approaches that follow earlier results in Section~\ref{sec:polling_approach} and our novel method based on affine envelope approximations in Section~\ref{sec:tt_envelope}. After evaluating our method in Section~\ref{sec:experiments} we draw some conclusions in Section~\ref{sec:conclusion}. 

\section{Preliminaries}
\label{sec:preliminaries}
\subsection{System model}
\label{sec:model}

\begin{figure}[!ht]
    \centering
	\includegraphics[width=0.99\textwidth]{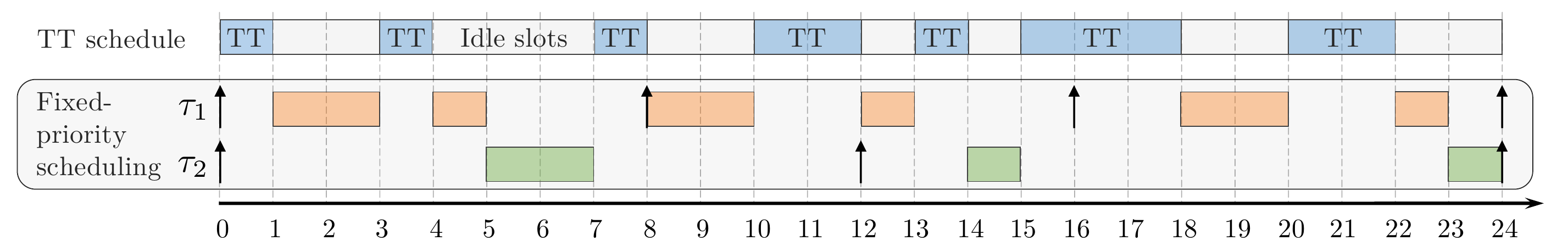}
	\caption{\label{fig:system} Static schedule table with $2^{nd}$-level fixed-priority scheduling in idle slots.}
\end{figure}

We assume a task dispatcher that schedules TT tasks based on an offline generated static schedule table (cyclic executive), and in the slots that are left for ET tasks, implements a $2^{nd}$-level preemptive scheduler based on fixed priorities (c.f. Figure~\ref{fig:system}). We denote the set of TT and ET tasks with $\mathcal{T}^{TT}$ and $\mathcal{T}^{ET}$, respectively. 

A TT or ET task $\tau_i$ is defined by the tuple ($p_i$, $C_i$, $T_i$, $D_i$) with $C_i$ denoting the computation time, $p_i$ is the task priority, and $D_i$ the relative deadline of the task. For TT tasks, $T_i$ represents the period, while for ET tasks where we assume a sporadic model, it describes the minimal inter-arrival distance (MIT). Usually, TT tasks have a constrained-deadline model ($T_i \leqslant D_i$), while ET tasks can have an arbitrary deadline, i.e., it can also be larger than the inter-arrival time. For convenience, we say that all TT tasks share the same (highest) priority. Event-triggered tasks, having a lower priority than TT tasks, are indexed in the order of their relative priority, i.e., $\tau_i$ has higher priority than $\tau_j$ ($p_i > p_j$) implies $i > j$, but several tasks may have the same priority ($p_i=p_j$) in which case they are selected in FIFO order.

The timeline of scheduling decisions is divided into equal segments by the microtick $mt$, representing the smallest scheduling granularity for tasks~\cite{235008, isovic09}. Usually, the granularity of the timeline is in the range of hundreds of microseconds to a few milliseconds; however, we do not assume any lower value here as the granularity in, e.g., embedded devices with custom runtime systems can go down to the order of microseconds (e.g.~\cite{CraciunasETFA14}). In the following, we assume that any $D,C,T$ are multiples of \mt. A {schedule} for a finite set of tasks $\mathcal{T} = \mathcal{T}^{TT} \cup \mathcal{T}^{ET}$ is a partial function $\sigma: \mathbb{N} \hookrightarrow \mathcal{T}$ from the time domain to the set of tasks, that assigns to each interval $[t\cdot mt,(t+1)\cdot mt )$ defined by the microtick granularity a task that is running in that time interval. We assume that each schedule $\sigma$ repeats after a certain time period called the schedule cycle, which is usually equal to the hyperperiod of the system, defined by $T = \textit{lcm}_{\tau_i \in \mathcal{T}^{TT}}\set{T_i}$. Furthermore, we assume that the system is not overloaded, \ie $U \leqslant \lambda$, with $\lambda$ the computation capacity of the system and hence $\sigma(t)$ is uniquely defined for each point on the microtick timeline. We consider in this work that the tasks from both sets $\mathcal{T}^{TT}$ and $\mathcal{T}^{ET}$ are scheduled on a single-core CPU with capacity $\lambda=1$.

We introduce a few notations to ease readability. For any task $\tau_i$, $p(i)$ is the priority of the task, $U_i=\frac{C_i}{T_i}$ is the utilization of the task $\tau_i$,
$U^{TT}=\sum_{\tau_i \in \mathcal{T}^{TT}} U_i$ is the utilization of all TT tasks, 
$U^{ET}=\sum_{\tau_i \in \mathcal{T}^{ET}} U_i$ is the utilization of all ET tasks,
$U_{>p}=\sum_{\tau_i \in \mathcal{T}, p(i) > p} U_i$ is the utilization of all tasks with priority higher than $p$,
$U^{ET}_{=p}=\sum_{\tau_i \in \mathcal{T}^{ET}, p(i) = p} U_i$ is the utilization of all ET tasks with priority equal to $p$,
$U^{ET}_{>p}=\sum_{\tau_i \in \mathcal{T}^{ET}, p(i) > p} U_i$ is the utilization of all ET tasks with priority higher than $p$, and notice that if $p$ is the priority of an ET task, $U_{>p}=U^{TT} + U^{ET}_{>p}$ since the TT task have higher priority than ET tasks. Using the same pattern, we define $C^{TT}, C^{ET}_{>p},  C^{ET}_{=p}, C^{ET}_{\geqslant p}$ and notice that $C^{ET}_{>p}+  C^{ET}_{=p} = C^{ET}_{\geqslant p}$.

\subsection{Real-time (and network) calculus}
\label{sec:network_calculus}
Network Calculus (NC)~\cite{NC-Book} is a theory for quantifying worst-case (latency and backlog) bounds in computer networks, using min-plus and max-plus algebra to relate the minimum service of network nodes and the maximum amount of flow traffic. Real-time calculus (RTC)~\cite{thiele2000real} is the real-time systems equivalent of NC for modeling and analyzing the worst-case behavior of tasks (e.g.~\cite{SUDHAKAR2021101856}). It also uses non-decreasing functions to model the maximum task computation demand and the minimum available CPU computation service in any specified time interval. It can thus be seen as a variant of the classical NC framework with some minor differences~\cite{8864582}. We only introduce the most important definitions and refer the reader to~\cite{thiele2000real, moy:hal-00442257} for a more in-depth description. 

The response time (i.e., delay) of a task $\tau_i$ of a set of tasks $\mathcal{T}$ is detailed in Theorem~\ref{thm:delay-bounds}. It depends on i) the maximum amount of requested computation of $\mathcal{T}$, represented by a so-called arrival curve $\alpha(t)$ defined in Definition~\ref{eq:alpha-def},  and ii) the minimum computation capacity available to $\mathcal{T}$, represented by a so-called minimum service curve $\beta(t)$ defined in Definition~\ref{def:service-curves}. Additionally, we define $\gamma$ in the maximum service curve, which represents the maximum amount of computation available to $\mathcal{T}$. As introduced in~\cite{thiele2000real}, the request function $R(t)\geqslant 0$ constitutes the accrued computation time solicited until time $t$. Conversely, the function $\mathcal{C}(t)\geqslant 0$ is the 
maximum computation time delivered until time $t$~\cite{thiele2000real}.

\begin{definition}[Arrival curve~\cite{thiele2000real}]
The arrival curve $\alpha (t)$ of a request function $R(t)$ is a non-decreasing function which satisfies:
\begin{equation}
 \label{eq:alpha-def}  
  R(t) - R(s) \leqslant \alpha(t-s), \forall s\leqslant t.
\end{equation}
\end{definition}

\begin{definition}[Service curves~\cite{thiele2000real, chakraborty2003general, 8864582}] \label{def:service-curves}
The maximum service curve $\gamma(t)$ and minimum service curve $\beta(t)$ of a capacity function $\mathcal{C}(t)$ are non-negative and non-decreasing functions satisfying:
\begin{equation}
\beta(t-s) \leqslant  \mathcal{C}(t)-\mathcal{C}(s) \leqslant  \gamma(t-s), \forall s\leqslant t.
\end{equation}
\end{definition}

\noindent We now reiterate the main result for computing bounds on delay.
\begin{theorem}[Maximum response time~\cite{chakraborty2003general}]
\label{thm:delay-bounds}
For a task dispatcher offering a minimum service curve $\beta(t)$ to a set of tasks $\mathcal{T}$ with an arrival curve $\alpha (t) $, the worst-case response time of a task is the maximum horizontal distance $hDev(\alpha,\beta)$ computed between $\alpha(t)$ and $\beta(t)$.  
\end{theorem}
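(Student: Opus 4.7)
The plan is to upper-bound the time each unit of cumulative demand waits before being served, and then argue that the bound is attained. Without loss of generality I would normalize $R(0) = \mathcal{C}(0) = 0$, and assume that the dispatcher serves the aggregate of $\mathcal{T}$ in FIFO order, so that the last unit of work arriving by time $s$ completes at the first instant $t \geqslant s$ for which $\mathcal{C}(t) \geqslant R(s)$. The worst-case response time of any task is then bounded above by $\sup_{s \geqslant 0} \inf\{d \geqslant 0 : \mathcal{C}(s+d) \geqslant R(s)\}$.

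Next, I would chain the two defining hypotheses. Definition~\ref{eq:alpha-def} applied with the lower endpoint $0$ yields $R(s) \leqslant \alpha(s)$, and Definition~\ref{def:service-curves} applied with endpoints $0$ and $s+d$ yields $\mathcal{C}(s+d) \geqslant \beta(s+d)$. Hence as soon as $\beta(s+d) \geqslant \alpha(s)$ we automatically have $\mathcal{C}(s+d) \geqslant R(s)$, so the displayed supremum is bounded above by $\sup_s \inf\{d \geqslant 0 : \beta(s+d) \geqslant \alpha(s)\}$, which is exactly the horizontal deviation $hDev(\alpha,\beta)$ between the two curves.

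For tightness I would exhibit a worst-case pair $(R,\mathcal{C})$. Choose $R(t) = \alpha(t)$, which is a legal ``greedy'' arrival since $\alpha$ is an arrival curve of itself, and choose $\mathcal{C}(t)$ to coincide with $\beta(t)$ at the critical abscissa $s^{\star}$ where the horizontal distance is realised (or approached, if the supremum is not attained). Then the last unit arriving at $s^{\star}$ is completed at $s^{\star} + hDev(\alpha,\beta)$, showing the bound is sharp.

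The main obstacle I expect is not the algebraic chaining of inequalities but justifying the aggregate-to-per-task step. The arrival and service curves describe cumulative behaviour of the whole set $\mathcal{T}$, while the statement talks about the response time \emph{of a task}. One therefore has to pin down that the scheduling discipline inside $\mathcal{T}$ is order-preserving (FIFO at the aggregate level), so that the completion of $R(s)$ units of cumulative work coincides with the completion of every individual request released by time $s$; otherwise an adversarial intra-$\mathcal{T}$ order could postpone a particular task beyond $hDev(\alpha,\beta)$ and the proof would collapse to an aggregate backlog statement rather than a per-task delay bound.
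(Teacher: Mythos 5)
The paper itself does not prove Theorem~\ref{thm:delay-bounds}; it is imported from the literature, so I am comparing your argument against the standard one. Your proof has a genuine gap in the very first step: you characterise the completion time of the work arrived by $s$ as the first $t\geqslant s$ with $\mathcal{C}(t)\geqslant R(s)$. But $\mathcal{C}$ is the \emph{capacity delivered} to the task set (Definition~\ref{def:service-curves}), not the work actually completed; capacity delivered while the backlog is empty is lost and cannot be credited against work that arrives later. Concretely, take $\alpha(t)=2$ for $t>0$, $\beta(t)=\mathcal{C}(t)=t$, and let two units of work arrive at $s=100$: your criterion declares them complete at $t=100$ (delay $0$) because $\mathcal{C}(100)=100\geqslant 2$, whereas the true completion is at $t=102$ and the true delay equals $\hDev(\alpha,\beta)=2$. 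So the quantity you bound is not an upper bound on the response time at all, and the subsequent chaining, although algebraically correct, bounds the wrong thing.

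The missing idea is that both curve inequalities must be anchored at a common, arbitrary window start $u$ rather than at $0$. The standard route is: the completed work satisfies $R'(t)\geqslant \inf_{0\leqslant u\leqslant t}\left\{R(u)+\beta(t-u)\right\}$ (a min-plus convolution, obtained e.g.\ by taking $u$ to be the start of the last backlogged period before $t$, during which all delivered capacity is consumed); then for each such $u<s$ one uses $R(s)-R(u)\leqslant\alpha(s-u)$ and $\beta\bigl((s-u)+d\bigr)\geqslant\alpha(s-u)$ for $d\geqslant\hDev(\alpha,\beta)$, which together give $R'(s+d)\geqslant R(s)$. Your anchoring at $0$ recovers only the $u=0$ term of the infimum. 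The aggregation/FIFO caveat you flag at the end is real but is the standard disclaimer attached to this theorem (the bound is on the virtual delay of the aggregate), and is not where the argument breaks; likewise the tightness construction is hand-waved but secondary, since the paper only uses the result as an upper bound.
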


\noindent To compute the response time of any priority, we use the service curve in Theorem~\ref{th:remainingService}.
\begin{theorem}[Minimum remaining service curve~\cite{wandeler2005real}]\label{th:remainingService}
For a preemptive fixed-priority dispatcher of computation capacity $\lambda$, and a set of tasks $\tau_i \in \mathcal{T}$ with priorities $p(i)$, the minimum service curve remaining to tasks of priority $p$ is the non-decreasing positive function
\begin{equation}
\beta^{SP}_{p} (t)= \nnup{\lambda\cdot t - \alpha_{>p}(t)}\ \text{, with~} \alpha_{>p}(t) 
    = \sum_{\tau_i \in \mathcal{T}, p(i) >p}\alpha_i(t).
\end{equation}
\end{theorem}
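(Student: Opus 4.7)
The plan is to apply the standard busy-period argument from preemptive fixed-priority analysis and then repackage the resulting pointwise lower bound into a valid minimum service curve in the sense of Definition~\ref{def:service-curves}. The intuition is that whatever capacity the dispatcher does not spend on strictly higher-priority work remains available to tasks of priority $p$, and the arrival curves of higher-priority tasks provide the worst-case bound on what they can steal.

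First, for fixed $s \leq t$, I would define $s^{\star}$ as the latest instant in $[s,t]$ at which the backlog of all priority-$\geq p$ tasks is zero. By definition of this busy-period start and because the dispatcher is preemptive fixed priority, it is continuously executing priority-$\geq p$ work on $[s^{\star}, t]$, so the capacity assumption yields $\mathcal{C}_{\geq p}(t) - \mathcal{C}_{\geq p}(s^{\star}) = \lambda(t-s^{\star})$.

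Second, I would bound the share consumed by strictly higher-priority tasks. Since their joint backlog was zero at $s^{\star}$, every unit they execute in $[s^{\star},t]$ was released in that same window, so the cumulative higher-priority service is at most $R_{>p}(t) - R_{>p}(s^{\star})$, which by Definition~\ref{eq:alpha-def} applied individually and summed gives $\sum_{p(i)>p}\alpha_i(t-s^{\star}) = \alpha_{>p}(t-s^{\star})$. Subtracting from the total service and using monotonicity of $\mathcal{C}_p$ (with $s \leq s^{\star}$) yields $\mathcal{C}_p(t) - \mathcal{C}_p(s) \geq \mathcal{C}_p(t) - \mathcal{C}_p(s^{\star}) \geq \nn{\lambda(t-s^{\star}) - \alpha_{>p}(t-s^{\star})}$, where the $\nn{\cdot}$ reflects that the quantity of service delivered is trivially non-negative as well.

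Third, I would upgrade this pointwise bound into a non-decreasing service curve. The bound above has the form $g(v)$ with $g(v) = \nn{\lambda v - \alpha_{>p}(v)}$, evaluated at the specific point $v = t-s^{\star} \in [0,t-s]$. Since any valid minimum service curve must itself be non-decreasing, and since $\mathcal{C}_p$ is non-decreasing, I would replace $g$ by its non-decreasing closure $g_\uparrow$, obtaining exactly $\beta^{SP}_p(t) = \nnup{\lambda t - \alpha_{>p}(t)}$ as stated. The main obstacle is precisely this last passage: the busy-period argument on its own only delivers the bound at the single value $v = t - s^{\star}$, whereas the closure asks for the supremum over all $v \in [0, t-s]$. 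I would resolve it via the classical NC observation that a strict service curve obtained against a non-decreasing cumulative function remains valid after taking its non-decreasing closure, so that $\beta^{SP}_p$ is a legitimate minimum service curve in the sense of Definition~\ref{def:service-curves}.
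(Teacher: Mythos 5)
The paper itself does not prove Theorem~\ref{th:remainingService}; it is recalled verbatim from \cite{wandeler2005real}, so your attempt can only be judged on its own merits. On those merits, there is a genuine gap, and it is exactly the one you flag at the end: your busy-period window is anchored at the wrong end of the interval, and the closing step cannot repair this. Taking $s^{\star}$ to be the \emph{latest} zero-backlog instant in $[s,t]$ yields the lower bound $\nn{\lambda(t-s^{\star})-\alpha_{>p}(t-s^{\star})}$, i.e., the function $g$ evaluated at the single, trajectory-dependent length $v=t-s^{\star}$, which can be arbitrarily close to $0$. Concretely, take $\lambda=1$ and one higher-priority task with $C=1$, $T=10$, so $\alpha_{>p}(u)=u/10+1$ for $u>0$; on $[s,t]=[0,10]$ with a single job released at $0$ and served in $[0,1]$, the backlog empties long before $t$, so $s^{\star}=t$ and your chain of inequalities delivers only $\mathcal{C}_p(t)-\mathcal{C}_p(s)\geqslant g(0)=0$, whereas the theorem asserts $\nnup{\lambda\cdot 10-\alpha_{>p}(10)}=8$. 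The closure observation you invoke is correct but presupposes that $\mathcal{C}_p(t)-\mathcal{C}_p(s)\geqslant g(t-s)$ already holds for \emph{every} interval, i.e., with the bound evaluated at the full length $t-s$; you have only established it at the one value $t-s^{\star}$, so appealing to it here is circular. The single-point bound would suffice to prove a min-plus (input/output) service curve for the priority-$p$ flow, but not the interval capacity bound of Definition~\ref{def:service-curves} that the theorem states.

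The repair is to anchor the busy period at $s$ rather than at $t$, and to keep a free parameter that generates the supremum. Let $w^{\star}\leqslant s$ be the last instant at which the strictly-higher-priority backlog is zero; continuous service of that backlog on $(w^{\star},s]$ gives $R'_{>p}(s)=R_{>p}(w^{\star})+\lambda(s-w^{\star})$, while for \emph{every} $u\in[w^{\star},t]$ the served work satisfies $R'_{>p}(t)\leqslant R_{>p}(u)+\lambda(t-u)$. Subtracting and setting $m=u-w^{\star}$, the higher-priority consumption in $[s,t]$ is at most $\alpha_{>p}(m)+\lambda\bigl((t-s)-m\bigr)$ for every $m\in[0,t-s]$, hence $\mathcal{C}_p(t)-\mathcal{C}_p(s)\geqslant\sup_{0\leqslant m\leqslant t-s}\set{\lambda m-\alpha_{>p}(m)}$, which with $\alpha_{>p}(0)=0$ is exactly $\nnup{\lambda(t-s)-\alpha_{>p}(t-s)}$. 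The quantifier over $u$ is what produces the supremum that the non-decreasing closure demands; your version has no such free parameter, which is why the last passage cannot be completed as written.
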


\noindent Similar to the NC considerations for sporadic flows and rate-latency servers~\cite{boyer_et_al:LIPIcs.ECRTS.2021.14}, 
if a task generates jobs of cost $C\in\mathbb{R}^+$ at a rate given by the period (or minimal inter-arrival time) $T\in\mathbb{R}^+$, it admits the linear arrival curve $\alpha_{r,b}: \mathbb{R}^+ \to \mathbb{R}^+$, $0 \mapsto 0$ and $t \mapsto rt + b$ if $t>0$, with $r = \frac{C}{T}$ and $b=r\cdot T$. When tasks are scheduled by a dispatcher in a certain slot, they are usually executed a constant rate $R$ (using one unit of computation for every unit of time) after some delay (latency) $L$ which is due to blocking by e.g. other higher-priority tasks. This matches a rate-latency service~\cite{boyer_et_al:LIPIcs.ECRTS.2021.14}, modelled by a function $\beta_{R,L}: t\mapsto R \cdot \nn{t-L}$. 

\begin{corollary}[Linear maximum response time]\label{cor:linearMaxResponse}
Consider a linear arrival curve $\alpha_{r,b}(t)$ and a rate-latency service curve $\beta_{R,L}(t)=  R\cdot \nn{t-L}$, then
$\hDev(\alpha_{r,b},\beta_{R, L})=L+\frac{b}{R}$.
\end{corollary}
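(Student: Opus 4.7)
The plan is to unfold the definition of the horizontal deviation recalled in Theorem~\ref{thm:delay-bounds}, namely
\[
\hDev(\alpha,\beta) \;=\; \sup_{t\geqslant 0}\, \inf\bigl\{\,d\geqslant 0 \;\big|\; \alpha(t)\leqslant \beta(t+d)\,\bigr\},
\]
plug in the explicit expressions $\alpha_{r,b}(t)=rt+b$ for $t>0$ (with $\alpha_{r,b}(0)=0$) and $\beta_{R,L}(t)=R\cdot\nn{t-L}$, and then perform the inner and outer optimisations in turn. Because both curves have very simple piecewise-affine shapes, no calculus is needed; the argument reduces to solving a single linear inequality for $d$ at each $t$ and then taking a supremum over $t$.

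First I would fix any $t>0$ and determine the smallest admissible $d=d(t)$. The constraint $R\cdot\nn{t+d-L}\geqslant rt+b$ forces $t+d\geqslant L$, and, given that, is equivalent to
\[
d \;\geqslant\; L - t + \tfrac{rt+b}{R} \;=\; L + \tfrac{b}{R} - t\left(1-\tfrac{r}{R}\right).
\]
Since the right-hand side already dominates $L-t$, this is exactly $d(t)$.

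Next I would take the supremum over $t\geqslant 0$. Under the standing stability assumption $r\leqslant R$ (without which the task set would overload the server and no finite response-time bound could hold), the coefficient $1-\tfrac{r}{R}$ is non-negative, so $d(t)$ is non-increasing in $t$. Hence the supremum is attained as $t\downarrow 0^{+}$, yielding $\hDev(\alpha_{r,b},\beta_{R,L})=L+\tfrac{b}{R}$. The case $t=0$ itself contributes $0$ because $\alpha_{r,b}(0)=0$, and is therefore not the maximiser; this is merely the standard burst behaviour, with the worst-case horizontal distance realised immediately after the jump of size $b$ at the origin.

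The only delicate step is precisely this treatment of the discontinuity of $\alpha_{r,b}$ at $0$, and the implicit appeal to $r\leqslant R$: if one defined $\alpha_{r,b}$ as left-continuous the supremum would still equal $L+\tfrac{b}{R}$ but would not be attained, so the statement should be read with ``maximum'' interpreted as a supremum. I expect this to be the only subtlety worth mentioning; once it is addressed, the computation above gives the claimed equality directly.
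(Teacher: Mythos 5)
Your proof is correct, and it is worth noting that the paper itself gives no proof of Corollary~\ref{cor:linearMaxResponse}: the claim is folded into Proposition~\ref{prop:linearProperties}, whose proof is deferred to \cite[Prop.~3.7]{DNC-Book}. What you have written is essentially the standard elementary argument that the cited reference contains --- unfold $\hDev(\alpha,\beta)=\sup_{t\geqslant 0}\inf\{d\geqslant 0 \suchthat \alpha(t)\leqslant\beta(t+d)\}$, solve the single linear inequality $R\,\nn{t+d-L}\geqslant rt+b$ for $d$ at each $t>0$, and observe that under $r\leqslant R$ the resulting $d(t)=\nn{L+\tfrac{b}{R}-t(1-\tfrac{r}{R})}$ is non-increasing, so the supremum is $L+\tfrac{b}{R}$, approached as $t\downarrow 0^{+}$. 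Two of your side remarks are genuinely useful and not in the paper: first, the hypothesis $r\leqslant R$ is indispensable (otherwise $d(t)\to\infty$ and the horizontal deviation is unbounded), yet the corollary as stated omits it --- it only appears later in Proposition~\ref{prop:linearProperties}; second, your observation that the ``maximum'' is really a supremum (not attained, because $\alpha_{r,b}(0)=0$ and the burst $b$ only appears for $t>0$) is accurate and clarifies the slight abuse of language in Theorem~\ref{thm:delay-bounds}. The only cosmetic omission is that your $d(t)$ should be clipped at $0$ for large $t$ (the inner infimum is over $d\geqslant 0$), but this does not affect the supremum and hence not the conclusion.
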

 
\begin{prop}\label{prop:linearProperties}
Let $r,r',b,b',R,L \in \nnR$ be some parameters. Then, we have $\alpha_{r,b}(t)+\alpha_{r',b'}(t) = \alpha_{r+r',b+b'}(t) $.
If $r \leqslant R $, then,
   $\hDev(\alpha_{r,b},\beta_{R, L})=L+\frac{b}{R} $
    and
   $ \nnup{\beta_{R,L}(t)-\alpha_{r,b}(t)}     
    =
    \beta_{(R-r),\frac{RL+b}{R-r}}(t)$.\\
   \textnormal{The proofs can be found in~\cite[Prop.~3.7]{DNC-Book}.}
\end{prop}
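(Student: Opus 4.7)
I would dispatch the three claims in increasing order of effort, leaning on the piecewise-affine description of $\alpha_{r,b}$ and $\beta_{R,L}$ throughout.

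First, the additivity $\alpha_{r,b}+\alpha_{r',b'}=\alpha_{r+r',b+b'}$ is a direct pointwise check. At $t=0$ both sides are $0$ by the convention built into the definition of $\alpha_{r,b}$; for $t>0$ the right-hand side expands as $(r+r')t+(b+b')=(rt+b)+(r't+b')$, matching the sum on the left. Nothing deeper is needed.

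Second, the horizontal deviation formula $\hDev(\alpha_{r,b},\beta_{R,L})=L+b/R$ is exactly the content of Corollary~\ref{cor:linearMaxResponse} already stated in the paper; the only new element here is the role of the hypothesis $r\leqslant R$, which is what ensures the deviation is finite (otherwise $\alpha_{r,b}$ eventually outgrows $\beta_{R,L}$ and the horizontal distance is unbounded). So this item reduces to invoking the corollary under the stated hypothesis.

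The substantive part is the residual-service formula. Set $f(t)\isdef \beta_{R,L}(t)-\alpha_{r,b}(t)$ and study its sign. For $t\leqslant L$ we have $\beta_{R,L}(t)=0$, so $f(t)\leqslant 0$ and the $\nn{\cdot}$ operator sends it to $0$. For $t>L$, $f(t)=(R-r)t-(RL+b)$, which is affine with slope $R-r\geqslant 0$ by hypothesis, and hence non-decreasing. Setting $t^\star \isdef \frac{RL+b}{R-r}$, a short check gives $t^\star\geqslant L$ (equivalent to $b\geqslant -rL$, which holds trivially since $b,r,L\in\nnR$), so $f(t)\leqslant 0$ on $[L,t^\star]$ and $f(t)=(R-r)(t-t^\star)\geqslant 0$ for $t\geqslant t^\star$. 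Therefore $\nn{f}(t)$ equals $0$ on $[0,t^\star]$ and $(R-r)(t-t^\star)$ beyond, which is already non-decreasing; the $\up{\cdot}$ closure has no further effect, and we recognise the result as $(R-r)\nn{t-t^\star}=\beta_{(R-r),\,t^\star}(t)$, as required.

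The only real obstacle is bookkeeping at the boundaries: one must be careful that $\alpha_{r,b}(0)=0$ rather than $b$ (so $f$ is handled correctly at the origin), and one must verify that the non-decreasing closure is superfluous on the piece beyond $t^\star$. Both are routine, so no genuine technical difficulty arises beyond those case distinctions.
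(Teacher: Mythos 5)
Your proof is correct. The paper gives no proof of its own for this proposition---it defers to the reference (Prop.~3.7 of the deterministic network calculus book)---and your direct pointwise verification (additivity of the affine curves, the horizontal-deviation formula under $r\leqslant R$ via Corollary~\ref{cor:linearMaxResponse}, and the sign analysis of $\beta_{R,L}-\alpha_{r,b}$ around $t^\star=\frac{RL+b}{R-r}$, including the checks that $t^\star\geqslant L$ and that the non-decreasing closure is idle) is exactly the standard argument. The only point worth flagging is the degenerate case $R=r$, where $t^\star$ is undefined; the residual-service formula implicitly requires $r<R$, and your argument is sound under that reading.
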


\section{Related work}
\label{sec:related_work}
Sporadic events can be readily integrated alongside periodic real-time tasks in the schedulability analysis of both fixed- and dynamic-priority systems using, e.g., Deadline Monotonic (DM) or Earliest-Deadline First (EDF) schedulability tests (c.f~\cite{10.1145/321738.321743, 128746, Baruah2005AlgorithmsAC, 1336766, 4815215}). However, it may be helpful to have some form of isolation in the temporal domain between certain task types, e.g., between periodic/sporadic and aperiodic tasks. The temporal isolation is usually achieved via bandwidth servers, where a bandwidth server is defined as a periodic task with a budget and period that can handle one or more aperiodic events. The bandwidth server can then be scheduled alongside the periodic or sporadic tasks via, e.g., fixed-priority or dynamic-priority dispatchers. One of the main goals of bandwidth servers (beyond temporal isolation) is to minimize the response time of aperiodic tasks, but no guarantees such as deadlines can be given for aperiodic events. Examples of bandwidth servers for aperiodic ET task handling are the Polling Server~\cite{Lehoczky87, Sprunt89}, the Deferrable Server~\cite{strosnider95}, or the Sporadic Server in both the FP~\cite{Ghazalie95, Sprunt89} and EDF~\cite{spuri96} variants. 

In systems with a time-triggered scheduler, integrating sporadic event-based tasks is more challenging than in purely fixed- or dynamic-priority systems. In~\cite{894152, 840842}, the authors present a ``holistic'' schedulability analysis and design optimization approach for systems where tasks are event-triggered, but the communication backbone is based on the time-triggered bus protocol TTP. In~\cite{tpop2002, tpop2003}, a mixed time- and event-triggered application model similar to ours is considered for distributed embedded systems. The authors first present an analysis of ET task schedulability given a pre-defined TT schedule and then use a list-scheduling-based heuristic with limited backtracking to guide the generation of TT task schedules also to increase ET task schedulability. The approach in~\cite{tpop2002, tpop2003} favors the correctness of TT schedules over ET schedulability and is not guaranteed to find a feasible schedule for both TT and ET tasks. In that respect, it is more similar to a greedy method for generating TT schedules, checking the schedulability of ET in the process, albeit with an improved probability towards ET schedulability via different heuristics~\cite{tpop2003}. In contrast, we only accept solutions in which both TT and ET tasks are schedulable, starting from the schedulability of ET tasks to impose constraints on the TT schedule generation.

Hierarchical scheduling approaches such as~\cite{10.1145/1017753.1017772, Shin2003, Shin2008, 1019197} can be viewed as a more generalized form of a polling approach (c.f. Section~\ref{sec:polling_approach}), where on one level there is a fixed-priority scheduler for ET tasks, and on the underlying layer, a periodic resource abstraction (or periodic server) is used to decouple TT schedule generation from ET task schedulability analysis. Using the worst-case service pattern for the periodic resource abstraction, as is done in~\cite{10.1145/1017753.1017772}, has, on the one hand, the downside of the abstraction overhead (c.f.~\cite{Shin2008}) and, on the other hand, the server design problem~\cite{1212738} makes the problem difficult to solve, even for bandwidth-optimal approaches such as~\cite{4408298}. Moreover, for a mixed ET and TT system, there is the additional complexity of deciding how many polling tasks (i.e., resource abstractions/servers) to use and how to assign ET task subsets to the resources. The method proposed in~\cite{10.1145/1017753.1017772} is designed for constrained deadline ET tasks, whereas our affine envelope approximation considers arbitrary ET deadlines. Furthermore, we use a completely different approach based on affine envelope approximation that eliminates the need for solving the server design problem.

A third category of related work concerns the integration of TT and ET tasks at runtime, where the main goal is to minimize the response times of ET tasks. In~\cite{495205, 896010, isovic09}, a slot-shifting method is presented, which allows static TT schedule slots to be moved in order to execute sporadic and aperiodic tasks arriving dynamically at runtime. In~\cite{skalistis2019timely, skalistis2020dynamic} the schedule is safely adapted at runtime to allow for improved Quality-of-Service or the execution best-effort tasks. Moreover, in~\cite{isovic09}, the authors also include an offline analysis for sporadic ET tasks, which only looks at so-called critical slots in order to guarantee schedulability. However, the main assumption is that slot shifting is possible in the TT schedule, a property not implemented in many table-driven dispatchers. The method in~\cite{10.1016/j.sysarc.2019.101652}, while not directly modifying the TT schedule at runtime, has for a more flexible TT model using priority-based scheduler for ET tasks, similar to our system model, but allowing the TT schedule to be configured to an arbitrary priority in relation to the ET tasks. The work in~\cite{9183322} an SMT-based static schedule table synthesis is combined with an EDF-based online scheduler that handles sporadic ET tasks. While these approaches may work well for the average case, they do not mandate the schedulability of ET tasks and, additionally, impose a flexible execution model on the TT dispatcher.

\section{TT and ET integration using Polling Tasks}
\label{sec:polling_approach}
For sporadic tasks, a straightforward approach is to simulate or analyze the worst-case behavior of sporadic ET tasks for every (or selected) possible variation of the idle slots in the static scheduling table reserved for the polling task(s), similar to~\cite{tpop2002, tpop2003}. Hence, the schedule synthesis step has to check for any placement of the TT slots if the resulting idle slots used by the polling task to handle ET tasks are sufficient to fulfill the deadlines of ET tasks. A feedback loop could, in principle, offer some heuristic suggestions that may guide the search and, on average, speed up the algorithm. However, while heuristics like the ones provided in~\cite{tpop2002, tpop2003} may create correct schedules for both TT and ET tasks, there is no guarantee of this since they tend to favor TT schedule correctness over ET. If ET schedulability is crucial, a brute-force approach that will check every possible TT schedule for ET schedulability is intractable and will not scale for medium and large systems.
Hence, we introduce two polling-based methods inspired by prior work that can be used to guarantee sporadic ET tasks in systems with a static TT schedule table. 

\begin{figure}[!t]
		\centering
		\includegraphics[width=0.83\textwidth]{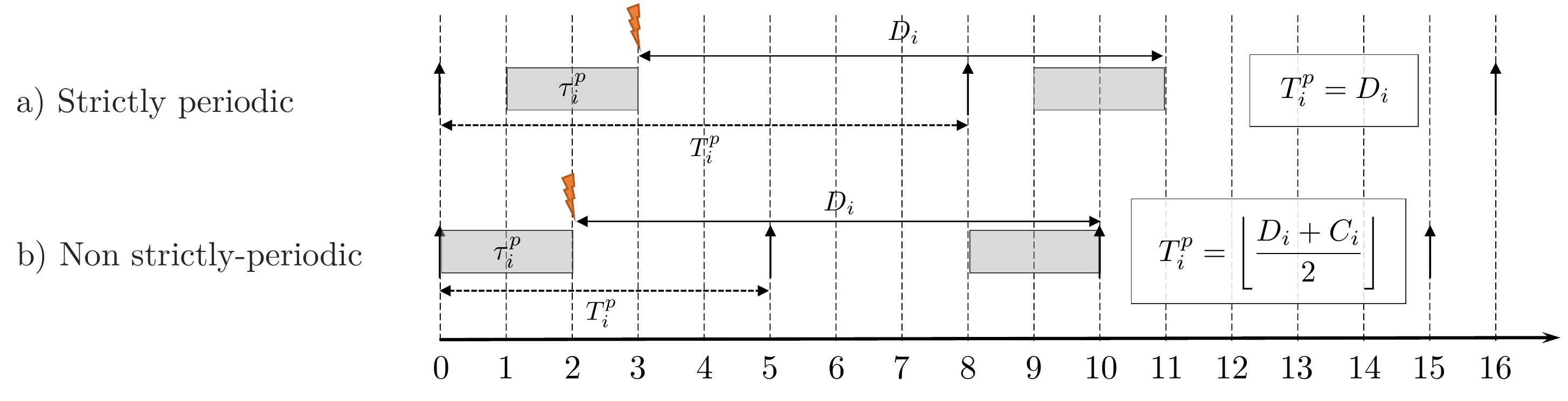}
		\caption{\label{fig:transformation} Worst-case event arrival wrt. slot placement.}
\end{figure}

A simple and computationally ``cheap'' method, which we call \emph{simple polling} \textbf{(SPoll)}, is to let each ET task $\tau_i \in \mathcal{T}^{ET}$ be handled by its own polling TT task $\tau^{p}_i$. If we know that the generated schedule table will be strictly periodic with respect to the placement of the polling task and the deadline is equal to the period, we can set the period of the polling task $\tau^{p}_i$ to $T^{p}_i = D_i$ and the computation time to $C^{p}_i = C_i$ (c.f. Figure~\ref{fig:transformation} (a)). In the case of non-strictly periodic slot placement (when using, e.g., EDF simulation to generate the schedule table~\cite{CraciunasETFA14}) and arbitrary ET deadlines, we have to use over-sampling to place the polling task in the static schedule table (c.f. Figure~\ref{fig:transformation}(b)). The over-sampling period $T_p$ is easily derived for e.g., out of the availability function in~\cite{10.1145/1017753.1017772} as $T^{p}_i = \lfloor\frac{D_i+C_i}{2}\rfloor$. For sporadic ET tasks with constrained deadlines, the polling task has a computation time $C^{p}_i = C_i$. For sporadic ET tasks with arbitrary deadlines, we need to consider how many previous job releases there can be within any polling period. Hence we have $C^{p}_i = \lceil \frac{T^{p}_i}{T_i} \rceil \cdot C_i$. This approach can be very pessimistic for tasks with a short deadline and long MIT/period or a long deadline and short period/MIT, leading to a reduced schedulability. Consider a similar example to the one from~\cite{Tindell97}, where a sporadic event with a computation time of $C_i = 2$ ms and a deadline of $D_i = 20$ ms needs to be handled. The event can occur at most once every $T_i = 100$ ms, thereby having a $2\%$ CPU utilization. However, because the exact arrival is not known, we would have to reserve a slot every $9$ ms (assuming a $2$ ms slot size) if we want to finish the execution within $20$ ms of any possible event arrival. This would consume $22,2\%$ of the CPU bandwidth. If the deadline is much larger than the period, e.g., $D_i = 100$ ms, $T_i = 10$ ms, and $C_i = 2$ ms, the period and the computation time of the polling task are $49$ and $10$, respectively, which results in a utilization of $20,4\%$. We see that, except in very simple systems, this approach results in a large over-utilization and will most likely not result in any feasible TT schedule creation. Both strictly periodic and non-strictly periodic approaches have the downside of reduced schedulability, either from oversampling or from the strictly periodic nature of ET slot placement.

A more precise approach, which we call \emph{advanced polling} \textbf{(AdvPoll)}, is a simplified version of the hierarchical scheduling paradigm~\cite{Shin2003, Shin2008, 1212738} with $2$ levels, a $2^{nd}$-level fixed-priority (FP) dispatcher for ET tasks, and a time-triggered dispatcher on the lowest level, similar to~\cite{10.1145/1017753.1017772}. Our reference method for the advanced polling is~\cite{10.1145/1017753.1017772} where the schedulability of a set of constrained deadline sporadic tasks is verified under a server with a given capacity and period. Similar to~\cite{10.1145/1017753.1017772}, we define a periodic resource abstraction (basically a budget and period) for each polling task such that the sporadic ET tasks are still schedulable if the polling task gets the desired budget in the given period. The offline schedule synthesis step for TT tasks can then readily include the polling task(s) when generating the schedule table as another periodic (set of) TT task(s), e.g., using exact methods or heuristics (c.f.~\cite{10.1023/A:1009804226473}). Naturally, there can be more than one polling task, each of them handling a disjoint subset of the ET tasks. To ease the notation, we assume for now that there is only one polling task $\tau_p$ handling the entire set $\mathcal{T}^{ET}$ of ET tasks for which $C_p$ and $T_p$ have to be determined. While in~\cite{10.1145/1017753.1017772} the polling task (periodic resource) is defined by a budget $C_p$ and a period $T_p$, a more general model called Explicit Deadline Periodic (EDP)~\cite{4408298, 10.1109/RTSS.2013.37, 1212738} can be used in which the server also has a deadline $D_p \leqslant T_p$. While this extension may increase the search space for possible TT schedules (and therefore schedulability), it will also result in a more complex server design problem (see below). 
We hence use the more simple model from~\cite{10.1145/1017753.1017772} to define the lower supply bound function $slbf(t)$ of a polling task $\tau^p$ in any time window of length $t \geqslant 0$. The exact expression of $slbf(t)$ can be found in~\cite{10.1145/1017753.1017772}, based on the characteristic function from~\cite{1212738}. To reduce complexity, the $slbf(t)$ is usually bound linearly from below by the so-called linear supply lower bound function $lslbf(t)$ (c.f.~\cite{1212738}) defined in~\cite{10.1145/1017753.1017772} using $a = \frac{C_p}{T_p}$ and $\Delta = 2 \cdot (T_p - C_p)$, as
\begin{equation}
\label{eqn:lslbf}
lslbf(t) = max\{0, (t-\Delta) \cdot a\}.
\end{equation}
Following the method in~\cite{10.1145/1017753.1017772}, we compute for each ET task $\tau_i \in \mathcal{T}^{ET}$ and for each instant $t$ the maximum load of task $\tau_i$ and all higher and equal priority tasks (maximum load of level-i) $H_i(t)$. We can use the classical definition of the maxium load of level-i from~\cite{63567} for constrained deadline tasks, namely 
\begin{equation}
\label{eqn:loadlevel}
H_i (t) = \sum_{\forall \tau_j \in \mathcal{T}^{ET}, p_j \ge p_i} \left \lceil \frac{t}{T_j} \right \rceil \cdot C_j.
\end{equation}

The schedulability condition for an ET task $\tau_i \in \mathcal{T}^{ET}$ is defined in Lemma 1 of~\cite{10.1145/1017753.1017772}. The worst-case response time $R_i$ for task $\tau_i \in \mathcal{T}^{ET}$ can be calculated by determining the earliest time instant in which the maximum load of level-i $H_i(t)$ intersects the linear supply bound function $lslbf(t)$ of the polling task from Eq.(\ref{eqn:lslbf}), as follows~\cite{10.1145/1017753.1017772}:
\begin{equation}
R_i = \text{earliest } t: t = \Delta + H_i(t)/\alpha.
\end{equation}

Using this method, the parameters of the polling task need to be found (which is, in essence, the non-trivial server design problem~\cite{Shin2003, 10.1145/1017753.1017772}) and then consider the polling task as a regular TT task alongside the other TT tasks in the system when creating the schedule table. The schedule generation step can be done relatively efficiently by simulating EDF/LLF scheduling until the hyperperiod of the TT tasks (e.g.~\cite{McLeanETFA20, CraciunasETFA14}), especially for harmonic TT task periods where the hyperperiods are (relatively) small~\cite{6932603} or when period re-dimensioning is possible to reduce the hyperperiod~\cite{6059178, 7176034}. However, the main drawback of this approach is that we have first to decide how many polling tasks to use, how to split ET tasks between them, and then for each polling task, find the computation time, the period, and the deadline. While there are specific optimizations that can be employed (e.g., using external points~\cite{10.1145/1017753.1017772} or the methodology from~\cite{1212738}), the approach can be computationally intensive for large systems as the assignment of ET tasks to polling tasks is in itself a combinatorial problem. 

In classical hierarchical scheduling, the aim is to find the resource abstraction with the least impact on other components, i.e., the best $(C_p, T_p)$ where the utilization is just large enough to respect the ET deadlines. The search for the best $(C_p, T_p)$ may be complex since we have to iterate not only through $T_p$, but for every $T_p$, we need to find $C_p$ (and potentially $D_p$). We can use a simplification here in order to get rid of the binary search for $C_p$ for every $T_p$ since we can use the maximum $C_p$ that does not lead to an overutilization, i.e., $C_p = \lfloor (1-U^{TT})\cdot T_p\rfloor$. However, we note that this optimization only applies if TT tasks have implicit deadlines, ET tasks have constrained deadlines, and if there is only one polling task with $D_p = T_p$, whereas our method also works for constrained-deadline TT tasks and arbitrary-deadline ET tasks. 

\section{TT and ET integration using affine envelope approximations}
\label{sec:tt_envelope}
The main idea of our method is to derive a constraint on the TT schedule that will guarantee ET task schedulability and then use this constraint to build a correct TT schedule. First, the constraint is expressed as a maximal affine envelope for the TT tasks, computed such that as long as a TT schedule respects this envelope (expressed as token-bucket arrival curve), all ET tasks respect their deadlines (Section~\ref{sec:SPTT:TT-max-env}). The second step consists in building a TT schedule generation algorithm that enforces the envelope while maintaining TT task schedulability (Sections~\ref{sec:blc} and~\ref{sec:blllf}).

Since we know the TT task set, the utilization rate of TT tasks $U^{TT}$ is known, but the burst $b^{TT}$ of the linear arrival curve $\alpha^{TT}(t)$ is unknown and depends on the future schedule. Furthermore, as TT has a higher priority than ET, this burst $b^{TT}$ impacts the ET response times. Hence, the goal of our method is first to identify the maximum burst $ b^{TT}_{max}$ such as the ET tasks fulfill their deadlines, and then to compute a TT schedule such as an arrival curve of the scheduled TT tasks is $\alpha^{TT}(t)=U^{TT}\cdot t + b^{TT}_{max}$. To do so, we first evaluate the impact of the TT tasks on the ET tasks and compute $b^{TT}_{max}$ in Section~\ref{sec:SPTT:TT-max-env}. Then, in Sections~\ref{sec:blc} and~\ref{sec:blllf} we present a scheduler capable of enforcing $\alpha^{TT}(t)=U^{TT}\cdot t + b^{TT}_{max}$.

\subsection{Computing a maximal affine envelope for TT tasks}\label{sec:SPTT:TT-max-env}

To compute the maximum TT burst such as the ET task deadlines are fulfilled, we first calculate the worst-case response time (i.e., delay) depending on the TT burst and TT utilization rate in Theorem~\ref{th:ET-bound:TT-burst}, then we deduce the maximum admissible TT burst in Theorem~\ref{th:maxTTBurst}.

\noindent First, we can bound the TT burst as defined in Theorem~\ref{th:pessimisticTTBurstBound}.
\begin{theorem}[Worst-case burst for TT tasks]\label{th:pessimisticTTBurstBound}
    The function $\alpha_{U^{TT},C^{TT}}$ is an arrival curve for the set of TT tasks $\mathcal{T}^{TT}$.
\end{theorem}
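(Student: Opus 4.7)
The plan is to combine two ingredients already available in Section~\ref{sec:preliminaries}: (i) the observation immediately preceding Corollary~\ref{cor:linearMaxResponse} that a periodic/sporadic task of cost $C$ and period (or MIT) $T$ admits the linear arrival curve $\alpha_{r,b}$ with $r=C/T$ and $b=rT=C$, and (ii) the additivity property from Proposition~\ref{prop:linearProperties}, namely $\alpha_{r,b}+\alpha_{r',b'}=\alpha_{r+r',b+b'}$.

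Concretely, I would proceed in three short steps. First, fix any TT task $\tau_i\in\mathcal{T}^{TT}$ and let $R_i(t)$ denote its request function. By the remark recalled above, $\alpha_i:=\alpha_{U_i,C_i}$ is an arrival curve for $R_i$, i.e.\ $R_i(t)-R_i(s)\leqslant U_i(t-s)+C_i$ for all $s\leqslant t$. Second, let $R^{TT}(t):=\sum_{\tau_i\in\mathcal{T}^{TT}} R_i(t)$ be the aggregate request function of the TT task set; summing the per-task inequality over $i$ yields
\begin{equation*}
R^{TT}(t)-R^{TT}(s)\ \leqslant\ \sum_{\tau_i\in\mathcal{T}^{TT}}\bigl(U_i(t-s)+C_i\bigr)\ =\ U^{TT}(t-s)+C^{TT},
\end{equation*}
for every $s\leqslant t$, which by Definition~\ref{eq:alpha-def} is exactly the statement that $\sum_i\alpha_i$ is an arrival curve for $R^{TT}$. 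Third, applying the additivity identity from Proposition~\ref{prop:linearProperties} inductively to the finite sum,
\begin{equation*}
\sum_{\tau_i\in\mathcal{T}^{TT}}\alpha_{U_i,C_i}\ =\ \alpha_{\sum_i U_i,\ \sum_i C_i}\ =\ \alpha_{U^{TT},C^{TT}},
\end{equation*}
which concludes the proof.

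There is no real obstacle here: the statement is essentially the aggregation of per-task token-bucket envelopes into a single token-bucket envelope for the whole TT set, and both facts needed (the per-task linear envelope for periodic jobs, and closure of affine arrival curves under addition) are explicitly cited in the preliminaries. The only subtlety worth flagging in a formal write-up is the boundary case $s=t$ (handled by the convention $\alpha_{r,b}(0)=0$ recalled in the text) and the fact that one is free to add arrival curves of independent request streams — which is immediate from summing the defining inequalities, as shown above.
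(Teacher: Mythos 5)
Your proof is correct and follows essentially the same route as the paper's: each TT task $\tau_i$ admits the linear arrival curve $\alpha_{U_i,C_i}$, and summing these over $\mathcal{T}^{TT}$ via the additivity property of Proposition~\ref{prop:linearProperties} yields $\alpha_{U^{TT},C^{TT}}$. You merely make explicit the intermediate step (summing the defining inequalities over the aggregate request function) that the paper leaves implicit.
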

 
\begin{proof}
The functions $\alpha_{U_i,C_i}$ are arrival curves for the TT tasks $\tau_i$. So an arrival curve for the set of TT tasks $\tau$ is $\alpha_\tau=\sum_{\tau_i}\alpha_{U_i,C_i}=\alpha_{U^{TT},C^{TT}}$ according to Proposition~\ref{prop:linearProperties}. 
\end{proof}
This is a (pessimistic) burst that will be refined further. A similar result has been presented in~\cite[Thm.~2]{NC-AVB-Boyer}, but a direct proof is given here for completeness.

\begin{theorem}[Response time of ET tasks]
\label{th:ET-bound:TT-burst}
Let $\alpha_{U^{TT},b^{TT}}$ be the arrival curve of the aggregated scheduled TT tasks, and $\alpha_{U_i,C_i}$ the linear arrival curve of each ET task $\tau_i\in \mathcal{T}^{ET}$. The maximum response time of a ET task $\tau_i$ of priority $p(i)$ is:
\begin{equation}
    \hDev\left( 
    \alpha_{p}^{ET}, \beta^{SP}_{p}
    \right)= \frac{b^{TT}+C^{ET}_{\geqslant p}}{\lambda-U_{>p}}
\end{equation}
\end{theorem}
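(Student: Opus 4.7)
The plan is to reduce the statement to a direct application of Theorem~\ref{thm:delay-bounds}, Theorem~\ref{th:remainingService}, Corollary~\ref{cor:linearMaxResponse}, and Proposition~\ref{prop:linearProperties}. First, I would gather all the flows whose arrival curves enter the analysis into a single token-bucket form. Since each ET task $\tau_j$ admits $\alpha_{U_j,C_j}$ and the aggregated TT traffic is $\alpha_{U^{TT},b^{TT}}$, repeated use of the additivity part of Proposition~\ref{prop:linearProperties} gives
\begin{equation}
    \alpha_{>p}(t)=\alpha_{U^{TT},b^{TT}}(t)+\sum_{\tau_j\in\mathcal{T}^{ET},\,p(j)>p}\alpha_{U_j,C_j}(t)=\alpha_{U_{>p},\,b^{TT}+C^{ET}_{>p}}(t),
\end{equation}
where the identity $U_{>p}=U^{TT}+U^{ET}_{>p}$ for an ET priority $p$ is used, together with the analogous equality for the bursts. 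The same manipulation applied to the equal-priority ET tasks yields $\alpha_p^{ET}=\alpha_{U^{ET}_{=p},\,C^{ET}_{=p}}$.

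Second, I would feed $\alpha_{>p}$ into Theorem~\ref{th:remainingService} to obtain the service remaining to priority~$p$, $\beta^{SP}_p(t)=\nnup{\lambda t-\alpha_{>p}(t)}$. Because the system is assumed non-overloaded, $U_{>p}\leqslant\lambda$, so the second identity of Proposition~\ref{prop:linearProperties} (applied with $R=\lambda$, $L=0$, $r=U_{>p}$, $b=b^{TT}+C^{ET}_{>p}$) rewrites this residual service as the rate-latency curve
\begin{equation}
    \beta^{SP}_p(t)=\beta_{\lambda-U_{>p},\,L_p}(t),\qquad L_p=\frac{b^{TT}+C^{ET}_{>p}}{\lambda-U_{>p}}.
\end{equation}

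Third, since both the arrival curve $\alpha_p^{ET}=\alpha_{U^{ET}_{=p},\,C^{ET}_{=p}}$ and the service curve $\beta^{SP}_p=\beta_{\lambda-U_{>p},L_p}$ are now in the linear/rate-latency family, Corollary~\ref{cor:linearMaxResponse} applies (the rate condition $U^{ET}_{=p}\leqslant\lambda-U_{>p}$ follows again from the no-overload assumption), giving
\begin{equation}
    \hDev(\alpha_p^{ET},\beta^{SP}_p)=L_p+\frac{C^{ET}_{=p}}{\lambda-U_{>p}}=\frac{b^{TT}+C^{ET}_{>p}+C^{ET}_{=p}}{\lambda-U_{>p}}=\frac{b^{TT}+C^{ET}_{\geqslant p}}{\lambda-U_{>p}},
\end{equation}
and concluding by Theorem~\ref{thm:delay-bounds} which identifies this horizontal deviation with the worst-case response time of any ET task at priority $p$.

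There is no real obstacle here beyond bookkeeping; the only subtlety is justifying that equal-priority ET tasks belong on the arrival-curve side rather than on the service-curve side (because the $2^{nd}$-level dispatcher is preemptive fixed-priority with FIFO arbitration within a priority level, so tasks at priority exactly $p$ compete with $\tau_i$ but do not preempt it in the FP sense), and checking the two rate conditions that enable the closed-form simplifications in Proposition~\ref{prop:linearProperties}. Both are immediate from the system assumptions stated in Section~\ref{sec:model}.
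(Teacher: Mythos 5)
Your proposal is correct and follows essentially the same route as the paper's proof: aggregate the higher-priority ET and TT arrival curves into a single token bucket, obtain the residual rate-latency service via Theorem~\ref{th:remainingService} and Proposition~\ref{prop:linearProperties}, and read off the horizontal deviation against $\alpha_{U^{ET}_{=p},C^{ET}_{=p}}$. The only additions are your explicit checks of the rate conditions and the justification for placing equal-priority tasks on the arrival-curve side, which the paper leaves implicit.
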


\noindent The hyperperiod does not appear in the expression since we use an overapproximation through affine functions in which we only require the individual task periods to compute this worst-case bound (e.g., \cite{Bini-SP-Continuous} also have a bound independent of the hyperperiod).

\begin{proof}

Let $(\alpha_i,\ldots,\alpha_{n^{ET}})$ the linear arrival curves of the ET tasks, with $n^{ET}=\card{\mathcal{T}^{ET}}$. Let the arrival curves of the aggregated tasks of priority p and of priorities >p be respectively:
  \begin{gather}
    \begin{aligned}
    \alpha_{p}^{ET} (t) & 
    \isdef \sum_{\tau_i \in \mathcal{T}^{ET}, p(i)=p} \alpha_i(t) 
    ~~~~~~~
    & 
    \alpha^{ET}_{>p}(t)  & 
    \isdef \sum_{\tau_i \in \mathcal{T}^{ET}, p(i) >p}\alpha_i(t) , ~~~~ 
    \end{aligned}
  \end{gather}

\noindent The online scheduler uses preemptive fixed-priority scheduling, so $\beta^{SP}_p(t)$ is a service curve for the task of priority $p$ (cf. Theorem~\ref{th:remainingService}) and $\hDev\left( \alpha_{p}, \beta^{SP}_{p} \right)$ is an upper bound on the delay of each task in $\mathcal{T}^{ET}_p$ (cf. Theorem~\ref{thm:delay-bounds}). We also consider the linear arrival curves for ET tasks, 
\begin{gather}
    \begin{aligned}
 \alpha^{ET}_{p} 
  & =  \sum_{\tau_i \in \mathcal{T}^{ET}, p(i) =p} \alpha_{U_i,C_i}
  ~~~~~~~
    & 
  \alpha^{ET}_{>p} & =  \sum_{\tau_i \in \mathcal{T}^{ET}, p(i) >p} \alpha_{U_i,C_i}
  \end{aligned}
    \end{gather}
 From the definitions in Section~\ref{sec:model}  and  Proposition~\ref{prop:linearProperties}, we have:
 \begin{gather}
    \begin{aligned}
    \alpha^{ET}_{p}  & = \alpha_{U^{ET}_{= p},C^{ET}_{=p}}
    ~~~~~~~~~~
    & 
    ~~~~~~~~~~
    & 
    \alpha^{ET}_{>p}  & = \alpha_{U^{ET}_{> p},C^{ET}_{>p}} 
 \end{aligned}
  \end{gather}
As TT priority is higher than ET priorities, we have $U^{TT}+U^{ET}_{>p}=U_{>p}$, which gives:
\begin{equation}
\alpha^{ET}_{>p} + \alpha_{U^{TT},b^{TT}}  =
\alpha_{U_{>p},b^{TT}+C^{ET}_{>p}}
\end{equation}
Since all TT tasks have a higher priority than  ET priority $p$, the sum of arrival curves with priority higher than $p$ is $\alpha^{ET}_{>p}+\alpha_{U^{TT},b^{TT}}$. Hence, the residual service can be expressed, using Theorem~\ref{th:remainingService} and Proposition~\ref{prop:linearProperties}:
  \begin{equation}\label{eq:betasp}
      \beta^{SP}_p(t)= \nnup{\lambda\cdot t - \alpha^{ET}_{>p}(t) -\alpha_{U^{TT},b^{TT}}(t)}
    =  \beta_{(\lambda-U_{>p}),\frac{b^{TT}+C^{ET}_{>p}}{\lambda-U_{> p}}}(t)
  \end{equation}
Finally, with Proposition~\ref{prop:linearProperties}, we have:
\begin{equation}
\hDev(\alpha^{ET}_p,\beta_{(\lambda-U_{>p}),\frac{b^{TT}+C^{ET}_{>p}}{\lambda-U_{>p}}}) = \frac{b^{TT}+C^{ET}_{>p}}{\lambda-U_{>p}} + \frac{C^{ET}_{=p}}{\lambda-U_{>p}} = \frac{b^{TT}+C^{ET}_{\geqslant p}}{\lambda-U_{>p}}.\label{eq:hDev}
\end{equation}
\end{proof}

\noindent We now define the maximum admissible TT burst such as ET tasks fulfill their deadlines in Theorem~\ref{th:maxTTBurst}.

\begin{theorem}[Maximal admissible TT burst]\label{th:maxTTBurst}
Let $\alpha_{U^{TT},b^{TT}}$ be the arrival curve of the aggregated scheduled TT tasks, and $\alpha_{U_i,C_i}$ the linear arrival curve of each ET task $\tau_i\in \mathcal{T}^{ET}$ with a priority $p(i)$.
The maximum value of $b^{TT}$ fulfilling the deadlines of all ET tasks is:
\begin{equation}
b^{TT}_{max}= \min\bigg(\min_{p(i)}\Big( (\lambda-U_{>p(i)})\cdot(\min_{\forall \tau_j ,p(i)=p(j)}D_j)-C^{ET}_{\geqslant p(i)}\Big), C^{TT} \bigg)
\end{equation}

\end{theorem}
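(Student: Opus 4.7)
The plan is to combine the response time formula from Theorem~\ref{th:ET-bound:TT-burst} with the deadline constraints of each ET task, invert the resulting inequality to isolate $b^{TT}$, and then take the intersection (minimum) of all the per-task constraints. The bound $C^{TT}$ is then added using Theorem~\ref{th:pessimisticTTBurstBound}, which caps the burst that can ever be produced by the aggregated TT tasks.

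First, for every ET task $\tau_j\in\mathcal{T}^{ET}$ with priority $p(j)$, schedulability requires that its worst-case response time does not exceed $D_j$. By Theorem~\ref{th:ET-bound:TT-burst} this yields the condition
\begin{equation*}
\frac{b^{TT}+C^{ET}_{\geqslant p(j)}}{\lambda-U_{>p(j)}} \;\leqslant\; D_j .
\end{equation*}
Since $\lambda - U_{>p(j)} > 0$ (the system is not overloaded), I can solve for $b^{TT}$ to obtain
\begin{equation*}
b^{TT}\;\leqslant\;(\lambda-U_{>p(j)})\cdot D_j - C^{ET}_{\geqslant p(j)}.
\end{equation*}
Next, I observe that the right-hand side depends on $\tau_j$ only through $D_j$ and through the priority-indexed quantities $U_{>p(j)}$ and $C^{ET}_{\geqslant p(j)}$. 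Thus, among all ET tasks sharing a common priority level $p$, the tightest constraint is obtained by picking the smallest deadline, i.e.\ by replacing $D_j$ with $\min_{\tau_j : p(j)=p}D_j$. Taking the intersection over all priority levels $p$ that appear in $\mathcal{T}^{ET}$ then gives the first term of the claimed expression.

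Finally, even if the ET constraints above were arbitrarily loose (e.g.\ because all ET deadlines are very large), Theorem~\ref{th:pessimisticTTBurstBound} guarantees that the aggregated scheduled TT tasks already admit the arrival curve $\alpha_{U^{TT},C^{TT}}$. Hence declaring an envelope with burst strictly greater than $C^{TT}$ brings no additional freedom to the scheduler; the ``useful'' maximum is capped by $C^{TT}$. Combining this with the per-priority bounds derived above yields
\begin{equation*}
b^{TT}_{max}= \min\!\left(\min_{p(i)}\!\left((\lambda-U_{>p(i)})\cdot\!\!\min_{\tau_j : p(j)=p(i)}\!\!D_j - C^{ET}_{\geqslant p(i)}\right),\; C^{TT}\right),
\end{equation*}
which is exactly the announced formula. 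The main subtlety — and the one I would want to be careful about — is the reduction from ``for all $\tau_j$ of priority $p$'' to ``the minimum-deadline $\tau_j$ of priority $p$''; this requires noting that $U_{>p(j)}$ and $C^{ET}_{\geqslant p(j)}$ are constant across a priority class, so only $D_j$ varies. The inclusion of $C^{TT}$ as an upper cap is justified via Theorem~\ref{th:pessimisticTTBurstBound} rather than from the deadline inequality itself, so that part of the argument is conceptual rather than algebraic.
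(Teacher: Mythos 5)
Your proposal is correct and follows essentially the same route as the paper: apply the response-time bound of Theorem~\ref{th:ET-bound:TT-burst} against each ET deadline, invert to isolate $b^{TT}$, collapse each priority class to its minimum deadline (since $U_{>p}$ and $C^{ET}_{\geqslant p}$ are constant within a class), take the minimum over priorities, and cap by $C^{TT}$ via Theorem~\ref{th:pessimisticTTBurstBound}. Your explicit remark on why only $D_j$ varies within a priority class, and on why the $C^{TT}$ cap is a separate, non-algebraic ingredient, is a slightly more careful write-up of exactly the argument the paper gives.
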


\begin{proof}
From Theorem~\ref{th:ET-bound:TT-burst}, we know that if $\forall\tau_j$, $$\frac{b^{TT}+C^{ET}_{\geqslant p(i)}}{\lambda-U_{>p(i)}} \leqslant \min_{p(i)=p(j)} D_j,$$ all ET tasks of priority $p(i)$ respect their deadlines. So $$b^{TT}\leqslant \min_{p(i)}\Big((\lambda-U_{>p(i)}) \cdot(\min_{p(i)=p(j)} D_j)-C^{ET}_{\geqslant p(i)}\Big)$$ and we know from Theorem~\ref{th:pessimisticTTBurstBound} that $b^{TT}\leqslant  C^{TT}$.
\end{proof}

For methods using the lower supply bound function, i.e. minimum service curve, we note that $lslbf(t)$ (c.f. Eq.~(\ref{eqn:lslbf})) uses the same linear approximation as $\beta^{SP}_p(t)$ (c.f. Eq.~(\ref{eq:betasp})). However, they are built under different hypotheses. The $lslbf(t)$ is computed considering only the slot duration and period that will be assigned to the polling tasks $\tau_p$, whereas  $\beta^{SP}_p(t)$ considers the impact of higher priority tasks on the current set of ET tasks of priority $p$. In $lslbf(t)$ the unknown values are $T_p$ and $C_p$, if we do not take the simplifications that lead to more pessimism explained in Section~\ref{sec:polling_approach}. In $\beta_{SP}(t)$ from our approach the unknown value is $b^{TT}$. However, after the unknown variables are computed, in both cases, the functions will lead to worst-case delays lower than the deadlines. As for the maximum requested computation, we described the ready tasks using a linear approximation $\alpha_{r,b}(t)$, instead of the more precise staircase function $H_i(t)$ defined in Eq.~(\ref{eqn:loadlevel}).

\begin{figure}[!t]
    \centering
    \includegraphics[width=0.99\linewidth]{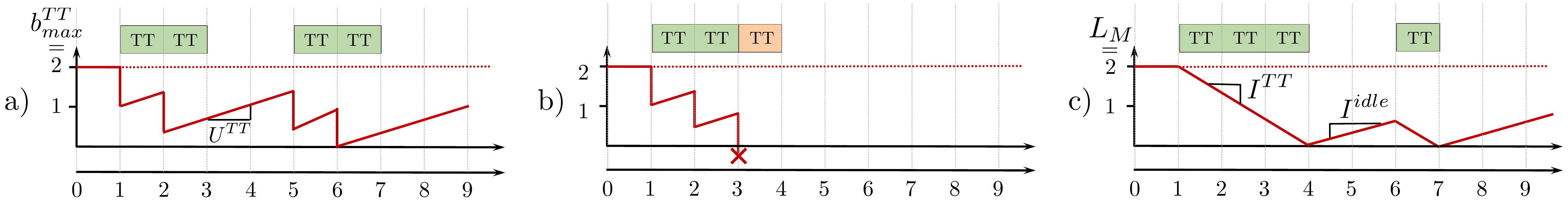}
    \caption{Checking a TT schedule under the TB and BLC constraints}
    \label{fig:emulatinTB}
\end{figure}

\subsection{Burst Limiting Constraint (BLC)}
\label{sec:blc}
Traditionally, a token bucket \textbf{(TB)} (or a leaky bucket) shaper would be used to check the computed TT envelope. Under a TB, the budget for a slot (i.e., 1) is paid at the slot allocation time $t$, while the budget continuously increases with a rate $I^{idle}$ as illustrated in Figure~\ref{fig:emulatinTB}~(a). This means that the budget is a non-continuous function. However, we are trying to check a continuous function $\alpha^{TT}(t)=U^{TT}\cdot t +b^{TT}_{max}$. Hence, between the moment the budget is paid and the end of the slot, the arrival curve function we are trying to conform to has increased by $\alpha^{TT}(1)=U^{TT}+b^{TT}_{max}$. In Figure~\ref{fig:emulatinTB}~(b), in interval $[1,4]$, with a maximum burst $b_{max}^{TT}=2$ and a replenishment rate of $U^{TT}=1/3$, the maximum allowed cumulative processing done for TT is $3$. Hence, the third TT task at $t=3$ does conform to $\alpha^{TT}(t)$, but the TB is too pessimistic due to paying the full budget at the start of the slot.
Due to this, the TB is only optimal for infinitesimally small resource demand granularity~\cite{5981848}. 

To improve this issue, we introduce a so-called \emph{Burst Limiting Constraint} \textbf{(BLC)}, inspired by the Burst Limiting Shaper (BLS)~\cite{Gotz2012, 7092358} and the Credit Based Shaper (CBS)\cite{AVBcbs}. Instead of paying the budget at the start of the slot, we check that the budget at the end of the slot conforms to the maximum arrival curve, and we pay the budget continuously during the slot at a rate $I^{TT}=1-U^{TT}$. Hence, at the end of a TT slot, the budget variation is the same as with a TB, but without the discontinuity of the budget. We detail the BLC in Definition~\ref{def:blc}, and we will show in Theorem~\ref{th:maximumArrivalBLC} that the proposed BLC offers a maximum service curve that can be easily parameterized to fit $\alpha^{TT}(t)$.

\begin{definition}[Burst Limiting Constraint]\label{def:blc}
Given a TT schedule $\sigma$, we define a Burst Limiting Constraint (BLC) such that a slot reserved for TT in $\sigma$ is invalid if the budget is strictly smaller than $0$ at the end of the slot, with the budget defined as follows:
\begin{itemize}
    \item the budget $bdg_{\sigma}(t)$  is a continuous piecewise linear function of the time t $\in \mathbb{R}^+ \mapsto \mathbb{R}$,  
    \item when a slot is reserved for TT in $\sigma$, the budget decreases at a rate $bdg'_{\sigma}(t)=-I^{TT}<0$,
    \item when a slot is idle in $\sigma$ (i.e., not assigned to TT), the budget increases at a rate $bdg'_{\sigma}(t)=I^{idle}>0$ while the budget is strictly smaller than a maximum value $L_M$, or else it remains constant at $L_M$, i.e. $bdg'_{\sigma}(t)=0$,
    \item the sum of $I^{TT}$ and $I^{idle}$ is the processing capacity: $\lambda$,
    \item at time $0$, the budget is $L_M$, i.e. $bdg_\sigma(0)=L_M$.
\end{itemize}
\end{definition}
The unit of the budget is the computation unit, the unit of $I^{TT}$ and $I^{idle}$ is computation units per time unit. The BLC budget variations are illustrated in Figure~\ref{fig:emulatinTB}~(c). We can see that in the interval $[1,4]$, we are able to assign $3$ slots, which is the maximum amount allowed by $\alpha^{TT}(t)=1/3\cdot t+2$ for an interval of duration $3$. In this respect, the BLC does better than the TB in Figure~\ref{fig:emulatinTB}~(b). However, in the interval $[0,6]$, with a first slot idle, there can be only $3$ TT slots due to the saturation of the budget between in $[0,1]$. So, while the BLC itself is not optimal either, its performance is better than the TB, and this difference can significantly impact schedulability. As visible in Figure~\ref{fig:emulatinEDF-under-TB}~(b) vs.~\ref{fig:emulatinEDF-under-TB}~(c), for a maximum burst $b=2$ under TB, the schedule is invalid, but when transforming the TB to a BLC with the same burst ($L_M$), the schedule becomes valid.

While the BLC resembles the CBS and BLS by its use of a budget/credit, Definition~\ref{def:blc} shows that it is quite different from either of them. With the BLC and contrary to the credit of the CBS \cite{imtiaz2009performance}, the budget is continuous, and we set the budget upper and lower bounds. Moreover, while the BLS has these $3$ properties, with the BLC, there is no priority inversion at a defined level $L_R$, and there can be no saturation of the budget at 0 \cite{finzi2018incorporating}.
 
We now present the maximum service curve offered to TT tasks by the BLC.

\begin{theorem}[BLC maximum service curve]\label{th:maximumArrivalBLC}
The maximum service curve of a set of scheduled TT tasks validated by the Burst Limiting Constraint (BLC) defined in Def.~\ref{def:blc} is
$\gamma_{blc}^{TT}(t) = I^{idle} \cdot t + L_M.$
\end{theorem}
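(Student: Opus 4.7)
The plan is to prove this by integrating the piecewise-linear budget over an arbitrary interval $[s,t]$ and then using the two-sided bound $0 \leqslant bdg_\sigma(\cdot) \leqslant L_M$ to extract an upper bound on the TT execution time. Since capacity is $\lambda=1$ and the scheduler is single-core, the cumulative TT computation delivered in $[s,t]$ equals the time spent in TT slots in $[s,t]$, which I denote $T^{TT}(s,t)$; similarly $T^{idle}(s,t)=(t-s)-T^{TT}(s,t)$. The idle time decomposes further as $T^{idle}(s,t)=T^{idle}_{free}(s,t)+T^{idle}_{sat}(s,t)$, where $T^{idle}_{free}$ is the duration in $[s,t]$ when the slot is idle and $bdg_\sigma<L_M$, and $T^{idle}_{sat}$ the duration when the slot is idle and the budget is saturated at $L_M$.

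First I would compute the total budget variation over $[s,t]$ by summing the slopes prescribed by Definition~\ref{def:blc}:
\begin{equation}
bdg_\sigma(t) - bdg_\sigma(s) = I^{idle}\cdot T^{idle}_{free}(s,t) - I^{TT}\cdot T^{TT}(s,t).
\end{equation}
The next step is to argue that $bdg_\sigma$ stays in $[0,L_M]$ at all times (not just at slot boundaries). The upper bound is immediate from the saturation clause and $bdg_\sigma(0)=L_M$. For the lower bound, note that the budget can only decrease during TT slots, at a constant rate $-I^{TT}$; validity of the BLC requires $bdg_\sigma\geqslant 0$ at the \emph{end} of each TT slot, so at the \emph{start} of that slot it is at least $I^{TT}$ (for a slot of length $mt=1$ in normalized units) and therefore remains nonnegative throughout. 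Between TT slots the budget only grows or stays flat, so globally $bdg_\sigma(\cdot)\geqslant 0$.

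Plugging $bdg_\sigma(s)\leqslant L_M$ and $bdg_\sigma(t)\geqslant 0$ into the variation equation gives
\begin{equation}
-L_M \leqslant I^{idle}\cdot T^{idle}_{free}(s,t) - I^{TT}\cdot T^{TT}(s,t),
\end{equation}
and since $T^{idle}_{free}(s,t)\leqslant T^{idle}(s,t)=(t-s)-T^{TT}(s,t)$,
\begin{equation}
I^{TT}\cdot T^{TT}(s,t) \leqslant L_M + I^{idle}\cdot\bigl((t-s)-T^{TT}(s,t)\bigr).
\end{equation}
Using $I^{TT}+I^{idle}=\lambda=1$ this collapses to $T^{TT}(s,t)\leqslant L_M + I^{idle}\cdot(t-s)$, which, because $\mathcal{C}^{TT}(t)-\mathcal{C}^{TT}(s)=T^{TT}(s,t)$ on a single core, is exactly the maximum-service-curve inequality of Definition~\ref{def:service-curves} for $\gamma_{blc}^{TT}(u)=I^{idle}\cdot u+L_M$.

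The only delicate step is the global nonnegativity of $bdg_\sigma$; the piecewise-linear identity and the algebraic collapse are routine. I would therefore devote one short paragraph of the write-up to carefully justifying that a slot-end check propagates to a pointwise lower bound of $0$ on the budget, and a second paragraph to the algebraic manipulation above.
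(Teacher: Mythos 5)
Your proof is correct and follows essentially the same route as the paper's: the same decomposition of the interval into TT time, free idle time, and saturated idle time, the same budget-variation identity, the two-sided bound $0 \leqslant bdg_{\sigma}(\cdot) \leqslant L_M$, and the same algebraic collapse via $I^{TT}+I^{idle}=\lambda$ (the paper keeps a general $\lambda$ and divides computation units by it, whereas you specialize to $\lambda=1$, which is the setting the paper assumes anyway). Your explicit argument that the slot-end nonnegativity check propagates to a pointwise lower bound on the budget is a welcome sharpening of a step the paper merely asserts.
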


\noindent \textbf{Proof.}
The proof is based on the proofs detailed in~\cite{finzi2018incorporating, finzi2020worst} for the Burst Limiting Shaper (BLS)~\cite{Gotz2012, 7092358} in TSN networks. We denote $\mathcal{C}^{TT} (t)$ the computation capacity function offered to TT tasks, and $\Delta \mathcal{C}^{TT}(t,\delta)= \mathcal{C}^{TT}(t+\delta)-\mathcal{C}^{TT}(t)$ its variation during an interval $\delta\geqslant 0$, and $\lambda$ the total processing capacity (in computation units per time unit). Hence, $\frac{\Delta \mathcal{C}^{TT}(t, \delta)}{\lambda}$ represent the executing time of the tasks TT during any interval $\delta$. According to Definition~\ref{def:service-curves}, we search $\gamma_{blc}^{TT}(\delta)$ such as $\Delta \mathcal{C}^{TT}(t, \delta) \leqslant \gamma_{blc}^{TT}(\delta), \forall t\geqslant0 $. 

We consider a known TT schedule $\sigma$. If $\sigma$ fulfills the BLC, then we know that $bdg_{\sigma}(t)\geqslant 0, \forall t\geqslant 0$ and that the budget cannot saturate at 0: if the budget is 0, the next slot will be idle to fulfill the BLC and so the budget will increase. Therefore, there are three possible variations of the budget: 1) the budget increases when a slot is idle, and the budget is strictly smaller than $L_M$; 2) the budget decreases when a slot is assigned to TT in $\sigma$; 3) the budget saturates at $L_M$ when a slot is idle, and the budget is already at $L_M$. Hence, we denote $\Delta \mathcal{C}_{L_M, sat}(t, \delta)$ the number of computation units where the budget is saturated at $L_M$.

We present here a lemma linked to the budget saturation and necessary for the maximum service curve proof. In Lemma \ref{minmaxsum}, we show how to bound the sum of the budget consumed and the budget gained, depending on the budget saturation. 

\begin{lemma}[Continuous budget bounds]\label{minmaxsum}
$\forall$ set of assigned TT tasks fulfilling a BLC, $\forall t\geqslant 0,\delta\geqslant 0$, the variation of the computation capacity $\Delta C^{TT}(t, \delta)$ is bounded by:
$$-L_M  \leqslant  -	
			\Delta C^{TT}(t, \delta)+(\delta - \frac{\Delta C_{L_M,sat}(t, \delta)}{\lambda})\cdot I^{idle}
		  \leqslant  L_M$$
\end{lemma}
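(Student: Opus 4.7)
The plan is to observe that the middle expression in the lemma equals the net change $bdg_\sigma(t+\delta)-bdg_\sigma(t)$ of the BLC budget over $[t,t+\delta]$, after which the result is immediate from the fact that the budget is always confined to $[0,L_M]$.

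First I would partition $[t,t+\delta]$ according to the three regimes of Definition~\ref{def:blc}: let $\theta^{TT}$ be the total time during which a TT slot is scheduled, $\theta^{sat}$ the time during which the slot is idle and the budget is pinned at $L_M$, and $\theta^{nsat}$ the time during which the slot is idle and the budget is strictly below $L_M$. Since the processor delivers computation at rate $\lambda$, these three durations satisfy $\theta^{TT}=\Delta C^{TT}(t,\delta)/\lambda$, $\theta^{sat}=\Delta C_{L_M,sat}(t,\delta)/\lambda$, and $\theta^{nsat}=\delta-\theta^{TT}-\theta^{sat}$.

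Second, I would integrate the instantaneous rate $bdg'_\sigma$ from Definition~\ref{def:blc} over each of the three pieces. The contributions are $-I^{TT}\cdot\theta^{TT}$, $0\cdot\theta^{sat}$ and $+I^{idle}\cdot\theta^{nsat}$; substituting the expressions for $\theta^{TT}$ and $\theta^{nsat}$ and using $I^{TT}+I^{idle}=\lambda$ from Definition~\ref{def:blc}, a short algebraic simplification gives
\begin{equation*}
bdg_\sigma(t+\delta)-bdg_\sigma(t) \;=\; -\Delta C^{TT}(t,\delta) + I^{idle}\cdot\Big(\delta - \tfrac{\Delta C_{L_M,sat}(t,\delta)}{\lambda}\Big),
\end{equation*}
which is precisely the quantity to be sandwiched in the lemma.

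Finally, since $\sigma$ fulfills the BLC, $bdg_\sigma(s)\geqslant 0$ at every time $s\geqslant 0$ (every TT slot ends with non-negative budget, and during idle intervals the budget only increases), while the saturation clause of Definition~\ref{def:blc} together with $bdg_\sigma(0)=L_M$ enforces $bdg_\sigma(s)\leqslant L_M$. Hence $bdg_\sigma(t+\delta)-bdg_\sigma(t)\in[-L_M,L_M]$, giving the claimed double inequality. The main obstacle is purely notational: one must treat the instants at which the budget exactly touches $0$ or $L_M$ so that the three-way partition $\delta=\theta^{TT}+\theta^{sat}+\theta^{nsat}$ is unambiguous (the continuous piecewise-linearity of $bdg_\sigma$ built into Definition~\ref{def:blc} makes this a measure-zero issue), and one must read $\Delta C_{L_M,sat}$ correctly as the ``virtual'' computation that would have been delivered while the budget was pinned at $L_M$, since this is what makes the time-to-computation conversion via $\lambda$ consistent.
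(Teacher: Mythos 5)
Your proof is correct and follows essentially the same route as the paper: both identify the middle expression as the net budget change over $[t,t+\delta]$ (the paper via summing a ``consumed'' and a ``gained'' budget term, you via integrating the rate over the three regimes, which is the same computation) and then conclude from the confinement of $bdg_\sigma$ to $[0,L_M]$. Your explicit three-way time partition and the remark on reading $\Delta \mathcal{C}_{L_M,sat}$ as saturated computation units merely make precise what the paper leaves implicit.
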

\begin{proof}
	
In an interval $t, t+\delta$, for any set of assigned TT tasks fulfilling the BLC, the accurate consumed budget is the duration corresponding to the slots $\frac{\Delta \mathcal{C}^{TT}(t, \delta)}{\lambda}$ multiplied by the  signed TT slope:
$$budget_{consumed}= \frac{\Delta \mathcal{C}^{TT}(t, \delta)}{\lambda} \cdot (-I^{TT}).$$

Conversely, the gained budget is the remaining time $\delta - \frac{\Delta \mathcal{C}^{TT}(t, \delta)}{\lambda}$ minus the saturation time $\frac{\Delta \mathcal{C}_{L_M,sat}(t, \delta)}{\lambda}$, multiplied by the idle slope:	 
$$budget_{gained}=\left( \delta - \frac{\Delta \mathcal{C}^{TT}(t, \delta)+\Delta \mathcal{C}_{L_M,sat}(t, \delta)}{\lambda}\right) \cdot I^{idle}.$$

Thus $\forall \delta\in \mathds R^+$, using the fact that $I^{TT}+I^{idle}=\lambda$, the sum of the gained and consumed budget, expressed as $budget_{consumed}+budget_{gained}$, is: $$-\Delta \mathcal{C}^{TT}(t, \delta)+\Big(\delta - \frac{\Delta \mathcal{C}_{L_M,sat}(t, \delta)}{\lambda}\Big)\cdot I^{idle}.$$
\noindent Since the budget is a continuous function with lower and upper bounds 0 and $L_M$, respectively, the sum of the consumed and gained budget is always bounded by $-L_M$ and $+L_M$:
$$-L_M  \leqslant  -	
			\Delta C^{TT}(t, \delta)+\Big(\delta - \dfrac{\Delta C_{L_M,sat}(t, \delta)}{\lambda}\Big)\cdot I^{idle}
		  \leqslant  L_M
		$$
\end{proof}
Returning to the proof of Theorem~\ref{th:maximumArrivalBLC}, we know from Lemma~\ref{minmaxsum} that $-L_M \leqslant -\Delta \mathcal{C}^{TT}(t, \delta)+(\delta - \frac{\Delta \mathcal{C}_{L_M,sat}(t, \delta)}{\lambda})\cdot I^{idle}$. Thus, $$\Delta \mathcal{C}^{TT}(t, \delta)\leqslant L_M+\Big(\delta - \dfrac{\Delta \mathcal{C}_{L_M,sat}(t, \delta)}{\lambda}\Big)\cdot I^{idle}.$$ We know by definition that: $ \Delta \mathcal{C}_{L_M, sat}(t, \delta)\geqslant 0$.
Hence, we obtain
$$\Delta \mathcal{C}^{TT}(t, \delta)  \leqslant I^{idle}\cdot\delta + L_M= \gamma_{blc}^{TT}(\delta).$$

\subsection{Burst Limiting Least Laxity First (B3LF)}\label{sec:blllf}

While checking that an existing TT schedule adheres to the BLC (or TB) is easy\footnote{Given a TT schedule, it is easy to check that it respects a given token-bucket constraint, as illustrated in Figure~\ref{fig:emulatinTB} (a) and (b). This can be done in linear time with regards to the schedule length, and if the token-bucket shape of a schedule is known, it can be updated in case of update of the schedule without a complete re-computation.}, the more interesting (and useful) question is how to create TT schedules that respect the BLC constraint. A first idea would be to emulate well-known mechanisms such as Earliest-Deadline-First (EDF)~\cite{10.1145/321738.321743} or Least-Laxity-First (LLF)~\cite{26230a7429bc45d6adf9ae7fcc590fe6} while keeping the schedule under the BLC, as illustrated in Figure~\ref{fig:emulatinEDF-under-TB}~(a). However, it is easy to show a counterexample (c.f. Figure~\ref{fig:emulatinEDF-under-TB}~(a) vs. Figure~\ref{fig:emulatinEDF-under-TB}~(b)) proving that it is not an optimal result and it may lead to deadline misses. The problem with using EDF/LLF is that it can reach the maximum allowed burst by scheduling a task immediately if, e.g., it is the only one in the ready queue, even though it has enough slack and could be executed later. Thus, at the next time instant, there is no more available budget, and we cannot schedule any 0-slack TT task that has been released, leading to a deadline miss. This problem will persist under any work-conserving algorithm since sometimes it may be necessary to insert idle times to have the full burst at a later time when it may be needed.

\begin{figure}[!t]
    \centering
    \includegraphics[width=0.99\linewidth]{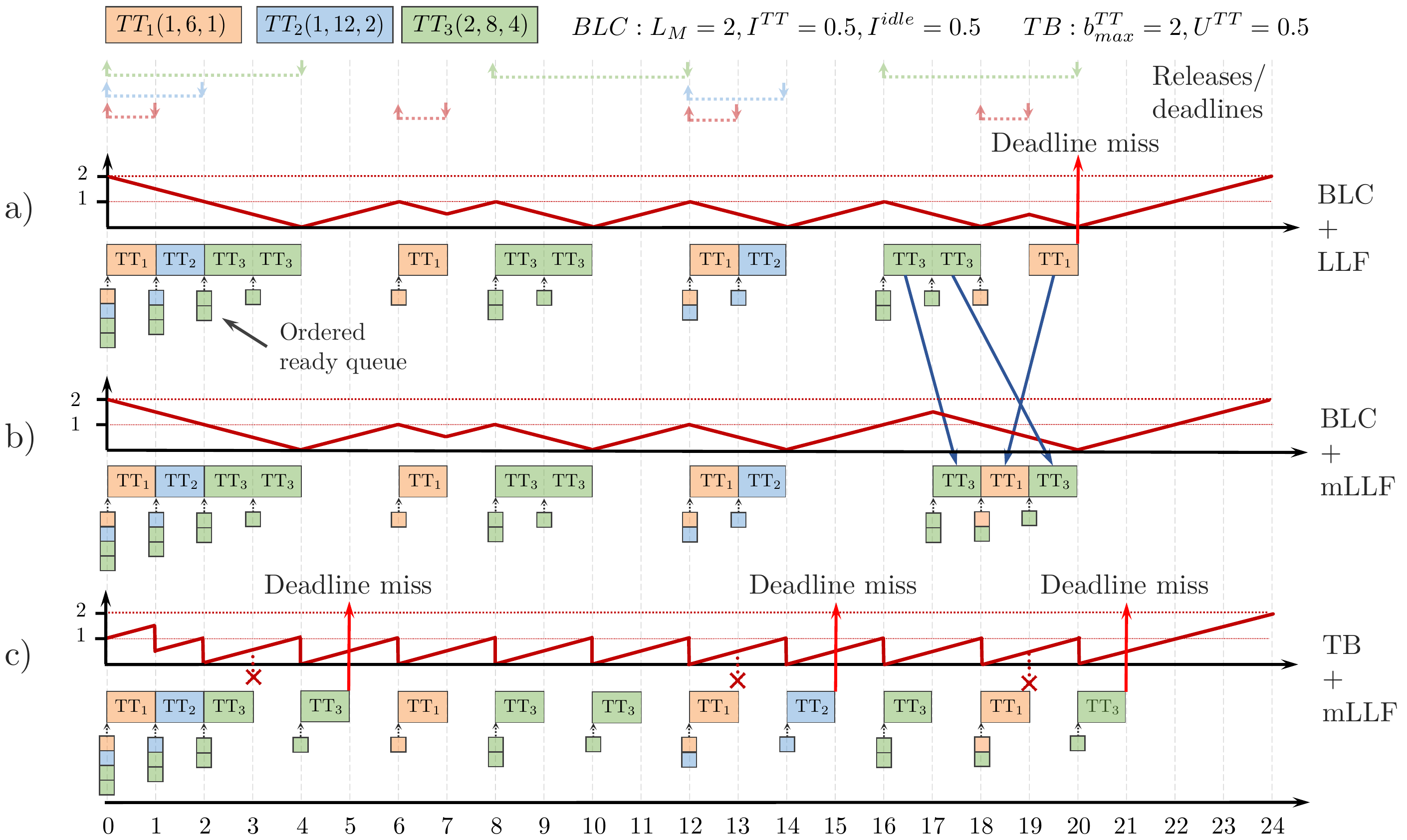}
    \caption{(a) LLF under a BLC constraint vs. (b) mLLF under a BLC constraint vs. (c) mLLF under a TB constraint.}
    \label{fig:emulatinEDF-under-TB}
\end{figure}

To solve this problem, we use the BLC to enforce the ET constraints through the maximum burst of TT, and we combine it with a modified LLF \textbf{(mLLF)} algorithm as described in Section~\ref{sec:blllf}. Together, the BLC and the mLLF algorithm result in a scheduler that enforces both TT and ET tasks deadlines. We call our method the \emph{Burst Limiting Least Laxity First} scheduler \textbf{(B3LF)}. As we will detail in Section~\ref{sec:blllf}, the B3LF algorithm respects the proposed BLC by construction. The main idea of the following analysis is that mLLF itself does not negatively constrain the TT tasks, so by modeling the BLC, we are able to model the TT constraint enforced by the whole B3LF. Hence, to model the B3LF in RTC, we separate it into its two components: the BLC and the mLLF, as illustrated in Figure~\ref{fig:bledf}. 
The B3LF is executed offline to create a static schedule table such that the TT arrival curve at runtime is $\alpha_{sp}^{TT}\leqslant \alpha^{TT}(t)=U^{TT}\cdot t+b^{TT}_{max}$, to enforce the ET deadlines (Theorem~\ref{th:ET-bound:TT-burst}).

In our model, the schedule is the output of the mLLF which itself depends on the BLC. Thus $\alpha_{sp}^{TT}$ is limited by the maximum service curve of the B3LF $\gamma^{TT}_{b3lf}(t)$, which is the minimum of the maximum service curves of the BLC $\gamma^{TT}_{blc}(t)$ and mLLF $\gamma^{TT}_{mllf}(t)$ : \begin{equation}
\alpha_{sp}^{TT}(t) \leqslant \gamma^{TT}_{b3lf}(t) = \min(\gamma^{TT}_{blc}(t),\gamma^{TT}_{mllf}(t))
\end{equation}
However, as will be shown in Theorem~\ref{th:maximumArrivalLLF} in Section~\ref{sec:bl3f}, the maximum service offered by the mLLF to TT tasks is only limited by the CPU processing capacity $\lambda$. So the TT input arrival curve in SP can only be constrained under $\lambda\cdot t$ by the BLC:
\begin{equation}\label{eq:alphaTTgammablc}
\alpha_{sp}^{TT}(t)  \leqslant \gamma^{TT}_{blc}(t)=I^{idle}\cdot t + L_M
\end{equation}
Finally, from the ET tasks, we have computed the affine TT envelope $\alpha^{TT}(t)$ to enforce the ET deadlines. As presented previously, we set the BLC parameters $I^{idle}=U^{TT}$, $I^{TT}=\lambda - U^{TT}$ and $L_M=b^{TT}_{max}$ and so we obtain a schedule with $\alpha_{sp}^{TT}(t)\leqslant \alpha^{TT}(t)=U^{TT}\cdot t+b^{TT}_{max}$, as illustrated in Figure~\ref{fig:bledf}.
 
\begin{figure}[!t]
    \centering
    \includegraphics[width=0.95\linewidth]{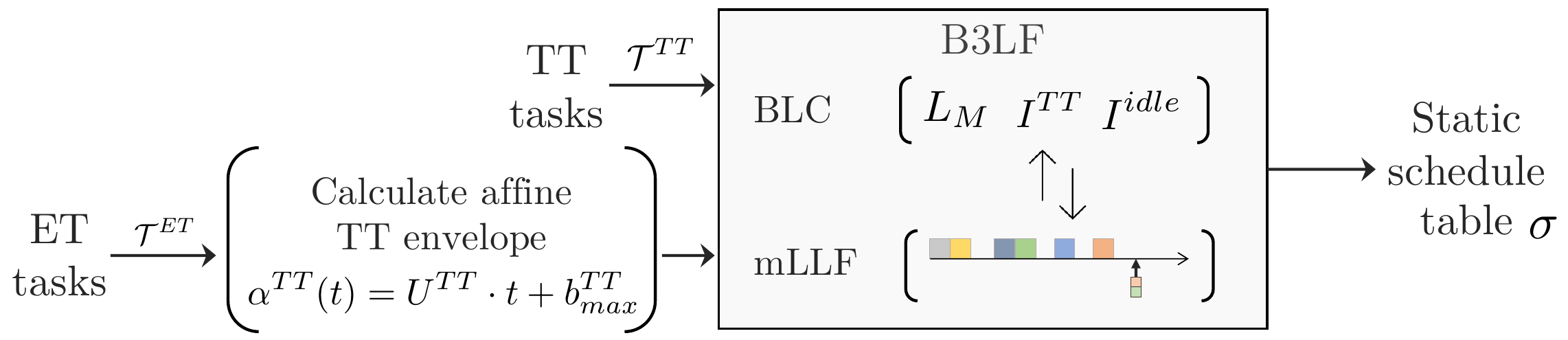}
    \caption{Burst Limiting Least Laxity First Scheduler (B3LF) and BLC parameterization}
    \label{fig:bledf}
\end{figure}

\subsubsection{B3LF Algorithm}\label{sec:bl3f}
The BLC ensures that the ET constraints are met by enforcing the TT slot allocation according to Definition~\ref{def:blc} with the budget replenishment and consumption rates given by $I^{idle}$ and $I^{TT}$, respectively, within the budget bounds 0 and $L_M$. However, we also need to ensure that the TT tasks are schedulable. Employing a standard LLF (or EDF) without BLC will only result in the schedulability of the TT tasks as shown below (Theorem~\ref{th:maximumArrivalLLF}).

\begin{theorem}[LLF and mLLF maximum service curve for TT tasks]\label{th:maximumArrivalLLF}
The maximum service curve of a set of TT tasks scheduled using the Least Laxity First, with (mLLF) or without (LLF) our proposed modification to add Idle tasks, for a processing capacity of $\lambda$, is:
\begin{equation}
    \gamma_{llf}^{TT}(t) = \gamma_{mllf}^{TT}(t) = \lambda \cdot t
\end{equation}
\end{theorem}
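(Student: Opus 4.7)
The plan is to prove the theorem in two complementary parts: a trivial upper bound that holds for both schedulers, plus tightness witnesses that rule out any strictly smaller affine function. Because a maximum service curve is any non-decreasing function $\gamma$ satisfying $\mathcal{C}^{TT}(t+\delta) - \mathcal{C}^{TT}(t) \leq \gamma(\delta)$ for all $t,\delta \geq 0$, the first direction is immediate: by the system model (Section~\ref{sec:model}), the CPU has capacity $\lambda$, so in any interval of length $\delta$ the total computation dispatched to \emph{all} tasks (TT and ET combined) is at most $\lambda\cdot \delta$. In particular $\Delta \mathcal{C}^{TT}(t,\delta) \leq \lambda\cdot \delta$ for both LLF and mLLF regardless of the workload, so $\lambda\cdot t$ is indeed a valid maximum service curve.

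For the converse (tightness), I would exhibit simple witness scenarios showing that no affine function below $\lambda\cdot t$ can be an upper bound. For plain LLF, take a TT task set with $U^{TT}=\lambda$, e.g., a single task with $C_i=T_i$: LLF is work-conserving, there is no idle-insertion rule, and therefore the CPU runs TT continuously, yielding $\mathcal{C}^{TT}(t)-\mathcal{C}^{TT}(0) = \lambda\cdot t$ for every $t\geq 0$. For mLLF, start from time $0$ where the BLC budget equals $L_M$ (per Definition~\ref{def:blc}), and consider any TT workload dense enough that LLF would run it back-to-back. As long as the budget remains strictly positive, the idle-insertion rule of mLLF is not triggered, and TT accrues at rate $\lambda$ throughout an initial interval of length at least $L_M/I^{TT}$, so $\Delta \mathcal{C}^{TT}(0,\delta)=\lambda\cdot\delta$ on that interval. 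Taking the family of such scenarios over arbitrarily large $L_M$ (a parameter of the shaper, not of mLLF itself) shows that the rate $\lambda$ can be sustained over arbitrarily long windows, and hence no affine bound strictly smaller than $\lambda\cdot t$ can be valid for mLLF either.

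The main obstacle is the mLLF case: one must argue rigorously that the modification introduced in Section~\ref{sec:blllf} never inserts an idle slot gratuitously, i.e., it is invoked only when continuing with a TT slot would violate the BLC. Once this operational property of mLLF is spelled out, the LLF witness transfers verbatim to mLLF during any BLC-slack phase, and the equality $\gamma^{TT}_{llf}(t)=\gamma^{TT}_{mllf}(t)=\lambda\cdot t$ follows. This is precisely the property the paper needs downstream, since it ensures that in the decomposition $\gamma^{TT}_{b3lf}=\min(\gamma^{TT}_{blc},\gamma^{TT}_{mllf})$ the mLLF component does not further constrain the TT service beyond what the BLC already imposes, so (\ref{eq:alphaTTgammablc}) reduces cleanly to the BLC bound $I^{idle}\cdot t + L_M$.
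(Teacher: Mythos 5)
Your upper-bound direction is exactly the paper's (and is all that is strictly needed to certify $\lambda\cdot t$ as a maximum service curve in the sense of Definition~\ref{def:service-curves}): the processing capacity caps $\Delta\mathcal{C}^{TT}(t,\delta)$ at $\lambda\cdot\delta$ under any scheduler. Your LLF tightness witness ($U^{TT}=\lambda$, work-conserving, every slot goes to TT) also coincides with the paper's one-line argument. The genuine gap is in your mLLF tightness argument, and the ``obstacle'' you flag at the end is not a technicality you left to the reader --- the operational property you would need is simply false. mLLF does \emph{not} idle only when a TT slot would violate the BLC: by construction (Algorithm~\ref{alg:schedulerBudget}), an idle slot is scheduled whenever $\tau_{IDLE}$ has strictly the least laxity, and $L_{IDLE}=\lfloor bdg_{\sigma}(t)/I^{TT}\rfloor I^{TT}$ becomes finite as soon as the budget drops below $L_M-I^{idle}$, i.e., essentially after the first TT slot. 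So even with ample budget remaining, mLLF inserts idle slots whenever the TT laxities permit it --- the paper explicitly describes this as steering mLLF to ``prefer scheduling idle slots whenever the laxity of TT tasks permits it.'' Your claimed initial window of length $L_M/I^{TT}$ at rate $\lambda$ therefore does not materialize for a generic dense workload, and the limit $L_M\to\infty$ is not available either, since $L_M$ is pinned to $b^{TT}_{\max}\leqslant C^{TT}$ by Theorem~\ref{th:maxTTBurst}.

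The gap closes immediately if you reuse your LLF witness for mLLF, which is what the paper does. When $U^{TT}=\lambda$, the BLC parameters become $I^{TT}=\lambda-U^{TT}=0$ and $I^{idle}=\lambda$, so the budget never decreases, $bdg_{\sigma}(t)=L_M$ for all $t$, the condition $bdg_{\sigma}(t)<L_M-I^{idle}$ never holds, $L_{IDLE}$ stays at its maximal value, and mLLF degenerates to plain LLF assigning every slot to a TT task. This single scenario realizes $\Delta\mathcal{C}^{TT}(t,\delta)=\lambda\cdot\delta$ for all $\delta$ under both schedulers, which is the entire content of the paper's proof and is what the downstream argument around Eq.~(\ref{eq:alphaTTgammablc}) actually needs.
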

\begin{proof}
If utilization $U^{TT}=\lambda$, then LLF/mLLF assigns all the slots to TT, and the full processing capacity is used by TT, so, according to, Definition~\ref{def:service-curves}, we have
$$\mathcal{C}(t)-\mathcal{C}(s)\leqslant \lambda \cdot(t-s)=\gamma_{llf}^{TT}(t-s)=\gamma_{mllf}^{TT}(t-s), \forall s\leqslant t.$$
\end{proof}

Hence we introduce a modified LLF scheduler (mLLF) that constructs the static schedule table under the given BLC such that the TT task deadlines are met. LLF (Least Laxity First)~\cite{26230a7429bc45d6adf9ae7fcc590fe6} assigns dynamic priorities according to the current task laxity, i.e., tasks with smaller laxity are assigned a higher priority. We use LLF (instead of, e.g., EDF) because it better suits our need to track the BLC budget consumption and replenishment at any instant on the discrete timeline since LLF is a job-level dynamic priority algorithm (as opposed to EDF, which keeps priorities fixed at job-level). However, we note that a modified job-level dynamic EDF algorithm may also be a viable approach. Moreover, some practical runtime issues associated with LLF (e.g., the complex implementation or runtime calculation of laxity values) are not a concern here since we only use LLF offline to generate static schedules. Additionally, the high preemption overhead associated with ``thrashing'' when multiple tasks have the same laxity can be mitigated either by post-processing the resulting schedule table or by directly using modifications like ELLF~\cite{777467}. ELLF aims to resolve situations in which thrashing between tasks typically occurs by executing them consecutively via excluding all but one task that would thrash from consideration. ELLF can be used straightforwardly, except that the special IDLE task is always selected based on the least laxity value and is never excluded. Post-processing would analyze the resulting schedule, identify thrashing situations, and exchange execution slots of thrashing tasks to minimize preemptions.

Our mLLF scheduler (Algorithm~\ref{alg:schedulerBudget}) works very similar to the standard algorithm in that at each point in time $t$, we compute the slack (or laxity) of a task $\tau_i \in \mathcal{T}^{TT}$ as $L_i(t) = D_i(t) - C_i(t),$ where $D_i(t)$ represents the duration from time $t$ to the next deadline of the task, and $C_i(t)$ represents the remaining computation time at time $t$. In addition to the TT tasks that are considered by our mLLF, we introduce a special \emph{IDLE} task, denoted $\tau_{IDLE}$ that is responsible for introducing idle slots into the schedule $\sigma$. The computation time, period, and deadline of $\tau_{IDLE}$ are irrelevant since it will always be active and, when selected for execution, will introduce an idle slot into the schedule. Let us denote the current budget of the BLC with $bdg_{\sigma}(t)$. The main aspect of the idle task is its laxity computed as 
\begin{equation}
 L_{IDLE}(t) =
  \begin{cases}
    \left \lfloor \dfrac{bdg_{\sigma}(t)}{I^{TT}} \right \rfloor I^{TT}      & \quad \text{ if } bdg_{\sigma}(t) < L_M - I^{idle}\\
    \infty  & \quad \text{ otherwise}
  \end{cases}
\end{equation}
The laxity of $\tau_{IDLE}$ at some time $t$ is the amount of time until an idle slot must be scheduled because the budget will reach $0$ when scheduling only $TT$ tasks. Hence, the closer the budget is to $0$, the higher priority $\tau_{IDLE}$ becomes, but the scheduler still allows a TT task to be scheduled if necessary. In this way, we make sure that we stay within the budget constraints of the BLC but also steer the mLLF to prefer scheduling idle slots whenever the laxity of TT tasks permits it. It is interesting to note that this customization does not change the maximum service offered to TT tasks, i.e., Theorem~\ref{th:maximumArrivalLLF} remains valid for mLLF.

The goal now is to compute a schedule $\sigma$ such as $\forall t$, the budget remains between $0$ and the maximum value, i.e., $L_M=b_{\max}^{TT}$, to enforce the ET deadlines. We define two helper functions: i) last\_deadline($\mathcal{T}^{TT}$, $T$) returns the last deadline of a task within the hyperperiod $T$. This is the last theoretical slot that can be attributed to a TT task. After that time, the budget will only increase, which gives us a minimal value for the budget at the end of the hyperperiod; ii) $LL(t, \mathcal{T}^{TT})$: returns the TT task with least laxity and remaining computation time out of all ready TT tasks at time $t$.

At time $0$, the current budget is set to $L_M$. However, at the end of the hyperperiod $T$, the budget $bdg_{\sigma}(T)$ may be lower than at the start of the hyperperiod, meaning that we may not be able to repeat the exact same schedule as in the first hyperperiod. Hence, we study different hyperperiods by varying the current budget at time 0 to construct different $\sigma$. We must find a $\sigma$ fulfilling the necessary condition: $bdg_\sigma(0) \leqslant bdg_{\sigma}(T)$ to ensure that this $\sigma$ is valid $\forall t$. We also define two sufficient conditions for the schedulability and non-schedulability:
\begin{itemize}
    \item if a schedule $\sigma$ is found with the initial budget $bdg_{\sigma}(0)$ at the minimal final value $min_{\sigma_i} bdg_{\sigma_i}(T)$, i.e., corresponding to the number of idle times between the last deadline and $T$, then this schedule is valid $\forall$ t;
    \item if no schedule is found with a budget at time 0 at $bdg_{\sigma}(0)=L_M$, then no schedule exists.
\end{itemize}
\begin{algorithm}[t!]
\caption{Scheduling TT tasks under the burst limiting constraint}\label{alg:schedTTglobal}
\label{alg:gobalScheduler}
\KwData{TT tasks $\mathcal{T}^{TT}$, TT utilization $U^{TT}$, max burst $b^{TT}_{\max}$, hyperperiod $T$, processing capacity $\lambda$}
\KwResult{$\sigma$}

$I^{TT} \gets \lambda-U^{TT} $; 
$I^{idle} \gets U^{TT} $; 
$L_M \gets b^{TT}_{\max}$\;

\Comment{Minimal value of the budget at the end of an hyperperiod}
min\_budget=$\min$(T - last\_deadline($\mathcal{T}^{TT}$, T)$\cdot I^{idle} , L_M)$\;

\Comment{We first check if a schedule can be found for the minimal budget with Algorithm~\ref{alg:schedulerBudget}}
initial\_budget $\gets$ min\_budget\label{line:initialBudgetSufficient1}\;
$\sigma$=schedule(initial\_budget, $\mathcal{T}^{TT}$, $T$,$L_M$, $I^{TT}$, $I^{idle}$)\;

\If{$\sigma \neq \emptyset$}{
    returns $\sigma$;   \Comment{A schedule has been found}
}

\If{$initial\_budget == L_M$}{
  returns $\emptyset$;   \Comment{No schedule can be found}
} 
\Comment{We check if the schedule can be found with the maximum budget using Algorithm~\ref{alg:schedulerBudget}}

initial\_budget = $L_M$\label{line:initialBudgetSufficient2}\;

$\sigma$=schedule(initial\_budget, $\mathcal{T}^{TT}$, $T$, $L_M$, $I^{TT}$, $I^{idle}$)\;

\If{$\sigma == \emptyset$}{
returns $\emptyset$;   \Comment{No schedule can be found}
}
\Comment{Finally, we compute schedules with Algorithm~\ref{alg:schedulerBudget} until we find a schedule with at least as much budget at t=T as at t=0, or fail}

\While{$\sigma \neq \emptyset$ \: $\&$ initial\_budget > $\max(min\_budget, bdg_{\sigma}(t))$}{\label{line:whileschedule}
\Comment{set initial budget for the next iteration}
initial\_budget  = $\lfloor \frac{bdg_{\sigma}(t)}{I^{TT}} \rfloor \cdot I^{TT} $\label{line:initalBudgetLoop}\;
$\sigma$=schedule($initial\_budget$, $\mathcal{T}^{TT}$, $T$, $L_M$, $I^{TT}$, $I^{idle}$)\;
}

\If{$\sigma == \emptyset \vee  initial\_budget \leqslant min\_budget$}{
returns $\emptyset$;   \Comment{No schedule can be found}
}
returns $\sigma$\;
\end{algorithm}

\begin{algorithm}[t!]
\caption{Scheduling TT tasks under the BLC depending on the initial budget}

\label{alg:schedulerBudget}
\KwData{initial\_budget, TT task set $\mathcal{T}^{TT}$, hyperperiod $T$, maximum budget $L_M$, TT slot budget $I^{TT}$, idle slot budget $I^{idle}$}
\KwResult{$\sigma$}
current\_budget $\gets$ initial\_budget\;
$\forall$ $\tau_i \in \mathcal{T}^{TT}$: $c_i \gets  C_i$; $d_i \gets D_i$\;
$t \gets 0$\; 
\While{$t < T$}{\label{line:algo2:whileLoop}
    \For{$\tau_i \in \mathcal{T}^{TT}$}{
        \If{$c_i > 0 \wedge d_i \geqslant t$}{
            return $\emptyset$; \Comment{Deadline miss!}
        }
        \If{$t \% T_i == 0$}{
            \Comment{Task release at time $t$}
            $c_i \gets C_i$\; $d_i \gets t + D_i$\; 
        }  
        \If{$c_i > 0$}{
        $L_i$ $\gets (d_i - t) - c_i$; \Comment{Compute laxity of task $\tau_i$}}
    }
    \eIf{current\_budget $< L_M - I^{idle}$}{
        $L_{IDLE} \gets \lfloor \frac{\text{current\_budget}}{I^{TT}} \rfloor I^{TT} $\;
    }{
        $L_{IDLE} \gets T$;\Comment{We make sure $\tau_{IDLE}$ has the highest laxity}
    }
    \Comment{Check if $\tau_{IDLE}$ has the least laxity out of all tasks with $c_i$ > 0}
    \eIf{$L_{idle} < L_i, \forall \tau_i \in \mathcal{T}^{TT} : c_i > 0$}{
    \Comment{Schedule idle slot}
        $\sigma$[t] $\gets$ idle\; 
        current\_budget $\gets \min(current\_budget + I^{idle}, L_M)$\;
    }{\Comment{If there is enough budget, schedule the least-laxity ready task}
    \eIf{$\big(current\_budget \geqslant I^{TT}\big)  \: \wedge \big(\big[ c_i>0, \forall i \in \mathcal{T}^{TT} \big] \neq \emptyset\big)$}{
        \Comment{Schedule least-laxity ready task}
        $\sigma$[t] $\gets \tau_i = LL(t, \mathcal{T}^{TT})$\; $c_i \gets c_i - 1$\; 
        current\_budget $\gets current\_budget - I^{TT}$\;  
    }{
    \Comment{Schedule idle slot}
        $\sigma$[t] $\gets$ idle\; 
        current\_budget $\gets \min(current\_budget + I^{idle}, L_M)$\;
    }
    }
}
\If{$\big[ c_i>0, \forall i \in \mathcal{T}^{TT} \big]$}{
\Comment{Schedule is infeasible if any TT task has $c_i>0$ at this point}
    return $\emptyset$\; 
}
returns $\sigma$\;
\end{algorithm}
Hence, in Algorithm~\ref{alg:gobalScheduler}, we start by checking both sufficient conditions (Lines~\ref{line:initialBudgetSufficient1} and~\ref{line:initialBudgetSufficient2}) using the function schedule($initial\_budget$, $\mathcal{T}^{TT}$, $T$, $L_M$, $I^{TT}$, $I^{idle}$) defined in  Algorithm~\ref{alg:schedulerBudget}, where a schedule is generated according the initial budget parameter. If the sufficient conditions are not fulfilled, we run Algorithm~\ref{alg:schedulerBudget} with different initial budgets (Line~\ref{line:whileschedule}), starting with an initial budget equal to the budget at $T$ when testing the sufficient infeasibility condition. To reduce the search, we set (Line~\ref{line:initalBudgetLoop}) the new initial budget to be a multiple of $I^{TT}$ rather than directly $bdg_{\sigma}(t)$ since, due to the budget checks, as many slots can be allocated consecutively and this reduces the search space without a significant negative impact. In over $1000$ test cases, we only saw one time where the solution with the proposed initial budget failed to find a schedule that was found otherwise. However, the optimization reduced the run time by up to $98.9\%$ in some of our test cases. The algorithm ends when a solution $\sigma$ is found (i.e. $bdg_\sigma(0)$ $\leqslant$ $bdg_\sigma(T)$), or when the final budget reaches the minimum final budget possible, $bdg_\sigma(T) \leqslant \min_{\sigma_i} bdg_{\sigma_i}(T)$, (since when no solution is found, the function of the final budget $\sigma \mapsto bdg_{\sigma}(T)$ is strictly decreasing from one iteration to the next).

We note that the fundamental relation between the B3LF and any method building on the hierarchical scheduling approach is that when $T_p$ is computed in the polling approach, the maximum load of level-i $H_i(t)$ has to consider the worst-case polling task placement, denoted with $\Delta$ (c.f. Figure 3 in~\cite{10.1145/1017753.1017772}). On the other hand, with our method, we constraint the TT slot placement to fit a feasible ET schedule, generally leading to $b_{max}^{TT} \leqslant \Delta$. The more exact EDP model~\cite{4408298} may improve the schedulability of AdvPoll but will also significantly increase the complexity of solving the server design problem. However, it may be interesting the relate $b_{max}^{TT}$ and $\Delta$ (with the extended EDP model) and maybe use $b_{max}^{TT}$ to derive the polling period and deadline more quickly, but we leave such endeavor for future work.

\subsection{Complexity analysis}
\label{sec:complexity}
We denote $n^{TT}=\card{\mathcal{T}^{TT}}$,
$n^{ET}=\card{\mathcal{T}^{ET}}$, and $t$ the number of possible schedule slots on the timeline until the schedule repeats, i.e., the schedule cycle, which is either the hyperperiod $T$ or a multiple thereof. The complexity of finding $b_{max}^{TT}$ is $\mathcal{O}(n^{TT}+n^{ET})$ due to the sum over $\mathcal{T}$ to find $U_{>p}$. The complexity of Algorithm~\ref{alg:schedulerBudget} is the same as a regular LLF algorithm, namely $\mathcal{O}(n^{TT} \cdot log(n^{TT}))$ (sorting by laxity) for every time slot of a potentially exponential-length schedule cycle (c.f.~\cite{Baruah2005AlgorithmsAC}).  
The complexity of the B3LF (i.e. Algorithm~\ref{alg:schedTTglobal}) is therefore $$\mathcal{O}\bigg(\frac{ C^{TT}}{\lambda-U^{TT}}\cdot t\cdot n^{TT}\cdot log(n^{TT})\bigg)$$ because of the complexity $\mathcal{O}(\frac{ C^{TT}}{\lambda-U^{TT}}) $ of the while loop search Line~\ref{line:whileschedule} for which the schedule cycle of B3LF is upper bounded by $\lfloor\frac{L_M}{I^{TT}}\rfloor \cdot T$. We know from Theorem~\ref{th:maxTTBurst} that $L_M\leqslant  C^{TT}$ and  $I^{TT}=\lambda-U^{TT}$. Consequently, $\mathcal{O}(\frac{L_M}{I^{TT}}) \sim \mathcal{O}(\frac{ C^{TT}}{\lambda-U^{TT}})$. Hence, for our proposed method, illustrated in Figure~\ref{fig:bledf}, the complexity is $$\mathcal{O}\bigg(n^{ET}+\frac{ C^{TT}}{\lambda-U^{TT}}\cdot t\cdot n^{TT}\cdot \log(n^{TT})\bigg).$$ 

The hyperperiod length can grow exponentially as a function of the maximum period and the number of tasks $n^{TT}$~\cite{Goossens01limitationof}. However, in many practical systems, the periods of TT tasks are relatively harmonic, leading to a manageable hyperperiod length (c.f.~\cite{7176034, kramer2015real}). For example, in a real-world use-case from the automotive domain which motivated this work, the $151$ TT tasks have periods in the set $\{ 5, 10, 20, 40, 80\} ms$. Even without harmonic periods, we note that all methods, including SPoll and AdvPoll, that need to produce a static TT schedule have an intrinsic exponential component in the length of the schedule cycle. However, the methods do differ in the additional complexity of guaranteeing ET tasks. SPoll guarantees ET tasks by construction in constant time, but the resulting period may lead to an even quicker hyperperiod explosion, which can be mitigated by more aggressive (and therefore wasteful) oversampling to maintain a manageable cycle size. AdvPoll has an additional complexity in checking ET task deadlines via the response-time analysis for every server configuration and, depending on the implementation, the complexity of solving the server design problem within the TT schedulability space. B3LF, on the other hand, has only a linear additional complexity (of computing the affine envelope) bounded by $\mathcal{O}(n^{TT}+n^{ET})$.

\subsection{Design optimization}
\label{sec:design_optimization}
In the design phase of a system, it is common that not all tasks are known from the beginning, and it usually takes several iterations before the final task set is defined. Other times, some tasks might be added or changed later on in the project life-cycle. If TT tasks are known, and ET tasks are in an iterative design process, recomputing a new schedule or checking whether the old one still respects the deadlines of the new ET task set may be cumbersome.
\begin{figure}[!t]
    \centering
    \includegraphics[width=0.95\linewidth]{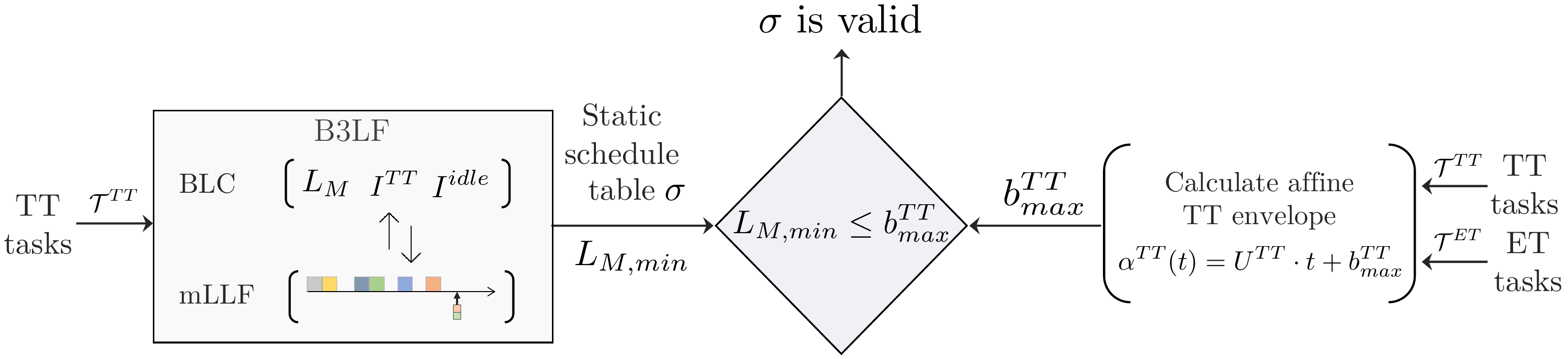}
    \caption{Design optimization of B3LF and TT affine envelop computation}
    \label{fig:b3lfOpti}
\end{figure}

With our proposed method, we can quickly check that the current ET tasks are fulfilling their deadlines by checking the $b^{TT}_{max}$ used to compute the TT schedule. In addition to this, we propose to use a design optimization, called \textbf{BinaryB3LF}, to find the minimum $L_M$ such that a schedule with only the TT tasks as input is feasible. With this $L_{M,\min}$, the scheduler needs to be run only once for the given set of TT tasks. Then, for each iteration of ET tasks, only the check of $b_{max}^{TT}\geqslant L_{M,\min}$ is needed to assess the fulfillment of ET deadlines without modifying the TT schedule, as illustrated in Figure~\ref{fig:b3lfOpti}. We know that i) $b_{max}^{TT}$ is upper bounded by $ C^{TT}$ from Theorem~\ref{th:maxTTBurst}; ii) $L_M\geqslant I_{TT}$ to be able to schedule at least one TT slot; iii) increasing $L_M$ increases the budget available for TT in the hyperperiod $T$, so the schedulability of TT depending on $L_M$ is discontinuous: not schedulable under $L_{M,\min}$, and schedulable over $L_{M,\min}$. Hence, we propose to set $I^{TT}=\lambda-U^{TT}$ and use a binary search to find the minimum value of $L_M$ such that the TT tasks are still schedulable. The search can be limited to multiples of $I^{TT}$ to improve runtime (see reasons explained for the $initial\_budget$ in Section~\ref{sec:bl3f}). 

The complexity of BinaryB3LF is $$\mathcal{O}\bigg(K\cdot n^{ET}+\log\bigg(\dfrac{ C^{TT}-\lambda+U^{TT}}{\lambda-U^{TT}}\bigg)\frac{ C^{TT}}{\lambda-U^{TT}}\cdot t\cdot n^{TT}\log(n^{TT})\bigg),$$ with $K$ the number of iterations done for the ET tasks, and  $\mathcal{O}(\log(\frac{L_M-I^{TT}}{I^{TT}}))$ being the complexity of the binary search. Without the optimization, the complexity is $\mathcal{O}(K\cdot (n^{ET}+\frac{ C^{TT}}{\lambda-U^{TT}}\cdot t\cdot n^{TT}\log(n^{TT})))$.
We note that a similar design optimization can be achieved by the simplification of choosing $C_p = \lfloor (1-U^{TT})\cdot T_p \rfloor$ in AdvPoll described in Section~\ref{sec:polling_approach} since computing the $C_p$ in this way generates the most ``dense'' allocation for the polling task for the selected $T_p$, maximizing the probability of allowing ET tasks to be added or changed without recomputing the TT schedule.

\section{Experiments}
\label{sec:experiments}

In this section, we compare our \textbf{B3LF} algorithm, also including the design optimization (\textbf{BinaryB3LF}), against \textbf{SPoll} in terms of schedulability and runtime. We implemented the SPoll method such that the polling period of each ET task does not lead to an explosion in the hyperperiod $T$, selecting the largest value lower than the ideal polling period that leads to a new hyperperiod that is smaller than $4\cdot T$. The hyperperiod explosion would be significant even for small use-cases without this additional oversampling.  After finding the polling task(s), we use a simple LLF simulation until the hyperperiod for SPoll to generate the static TT schedule.

We extended the task set generator from~\cite{756f61c4d530428194f64e61f41a0c2a, taskgen}, to create task sets containing TT and ET tasks with a deadline-monotonic priority assignment between $0$ and $6$ for ET tasks and a priority of $7$ for TT tasks. All generated task sets are schedulable if the ET tasks are considered as TT tasks and statically scheduled. All experiments were run on an Apple MacBook M1 Pro 10-core ($3.12$ GHz) machine with 16 GB of LPDDR5 memory.

For the first set of experiments, we compare the approaches in terms of schedulability and runtime for use-cases with a $10 \mu s$ microtick, $30$ TT and $20$ ET tasks per task set with periods selected from the set $\{20, 30, 40\}ms~(T=120ms)$, and $100$ task sets per test case. For the constrained ET deadline test cases, $D_i$ is uniformly selected in the upper half of the interval $[C_i, T_i]$, and for arbitrary ET deadlines, we use $D_i \in [T_i, 5\cdot T_i]$. We hence construct a favorable scenario for SPoll to be able to compete with our approach, as a smaller laxity will lead to worse results for the two classical methods (see Figure~\ref{fig:decreasing_laxity} below). 

\begin{figure}[!t]
    \centering
    \includegraphics[width=\linewidth]{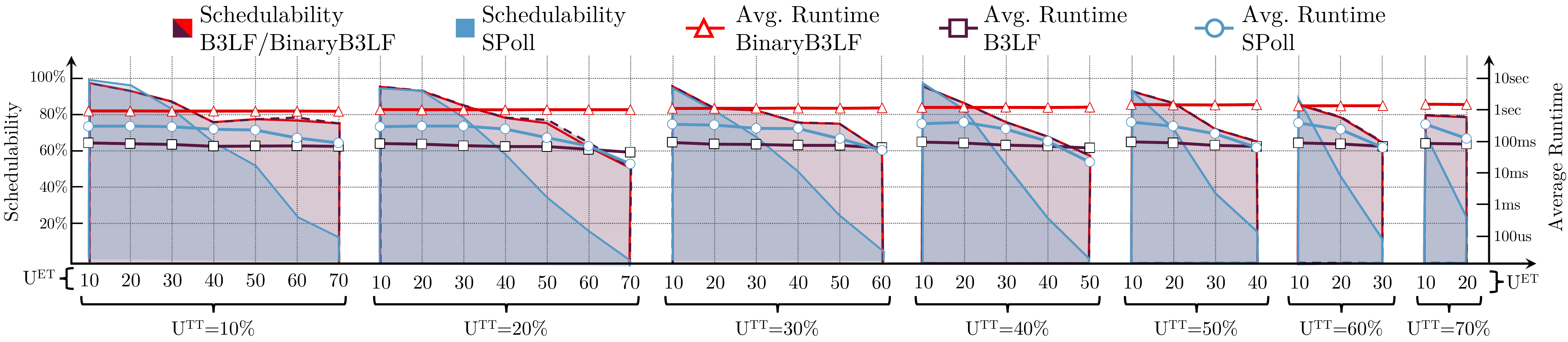}
    \caption{Schedulability and average runtime with arbitrary ET task deadlines in $[C_i, 5\cdot T_i]$.}
    \label{fig:multiplot_Di_sup}
\end{figure}
\begin{figure}[!t]
    \centering
    \includegraphics[width=\linewidth]{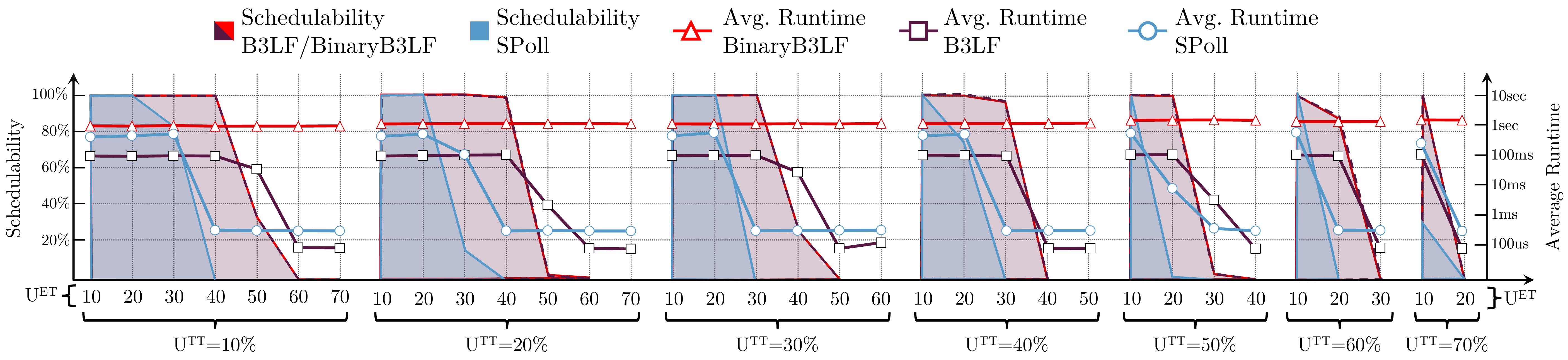}
    \caption{Schedulability and average runtime with constrained-deadline ET tasks.}
    \label{fig:multiplot_Di_inf}
    \vspace*{-10pt}
\end{figure}
We vary $u^k\in \{0.1, 0.2,..,0.7$\} such that $u^{TT}+u^{ET}\leqslant 0.9$ resulting in 34 tuples ($u^{TT}$, $u^{ET}$) as seen on the x-axis of Figures~\ref{fig:multiplot_Di_sup} and~\ref{fig:multiplot_Di_inf}. Our method consistently outperforms SPoll in terms of schedulability (left y-axis), sometimes by a significant amount, being able to schedule over $70\%$ of the task sets compared to under $10\%$ for SPoll. For arbitrary deadlines (Figure~\ref{fig:multiplot_Di_sup}), which AdvPoll cannot handle, we achieve a high test case schedulability rate even for highly utilized systems while almost always being faster than SPoll. For constrained-deadline systems (Figure~\ref{fig:multiplot_Di_inf}) with higher ET task utilization, we can still schedule relatively many task sets (sometimes even $100\%$), while the classical methods fail to schedule any. B3LF, as well as SPoll, cannot schedule any task set for systems utilizations approaching $90\%$ or for high ET task utilization ($>60\%$). For such systems, the main question is if a schedule that respects both TT and ET schedulability is possible at all. In almost all test cases, the schedulability of BinaryB3LF and B3LF is the same with a few exceptions where B3LF is better by $1-2\%$ because we consider only multiples of $I^{TT}$ for the values of $L_M$ in BinaryB3LF. 

In terms of runtime (right logarithmic y-axis), SPoll decreases with schedulability since the algorithm ends at the first polling task, for which the oversampling leads to infeasibility. The BinaryB3LF design optimization is, as expected, slower than B3LF and SPoll due to the binary search for the best $L_M$.

In the second set of experiments, we study the runtime of our approach for increasing number of tasks (Figure~\ref{fig:runtime_increasing_tasks}) and (rapidly) increasing hyperperiod (Figure~\ref{fig:runtime_increasing_hyperperiod}). Additionally, we look at the effect of decreasing laxity in constrained-deadline task sets (Figure~\ref{fig:decreasing_laxity}). 

In Figure~\ref{fig:runtime_increasing_tasks} we generate $100$ constrained-deadline task sets per test case with each task set having $20\%$ ET and $20\%$ TT task utilization, and periods chosen from the set $\{50, 100\}ms$ to keep the hyperperiod the same across test cases. We see the effect of increasing the number of tasks (x-axis) from $8$ to $1024$ (equal number of TT and ET tasks) on the runtime (logarithmic y-axis) of B3LF and BinaryB3LF, confirming the theoretical complexity findings from Section~\ref{sec:complexity} and showing the efficiency of our method in relation to SPoll. 

\begin{figure}[!t]
     \centering
     \begin{subfigure}[b]{0.31\textwidth}
         \centering
         \includegraphics[width=\textwidth]{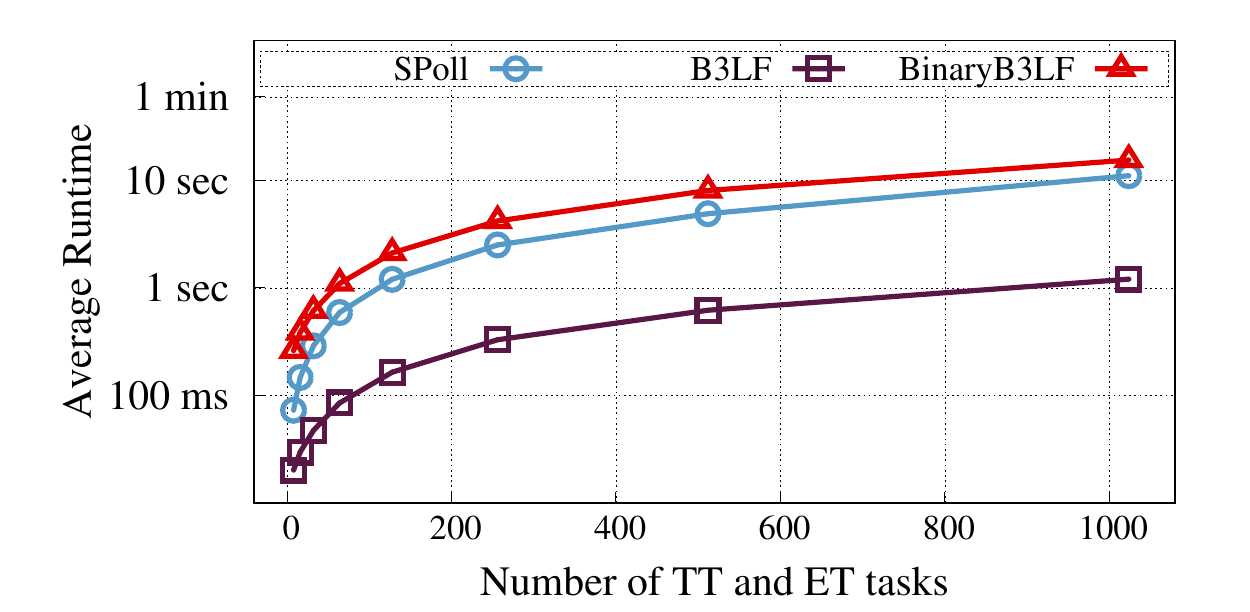}
         \caption{increasing number of tasks}
         \label{fig:runtime_increasing_tasks}
     \end{subfigure}
     \hfill
     \begin{subfigure}[b]{0.305\textwidth}
         \centering
         \includegraphics[width=\textwidth]{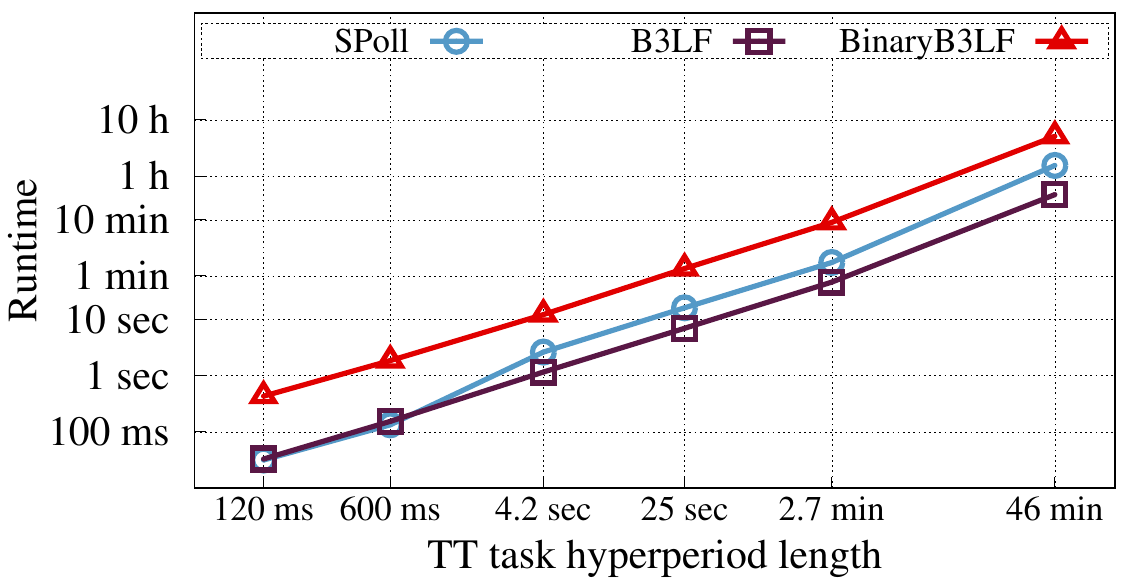}
         \caption{increasing TT hyperperiod}
         \label{fig:runtime_increasing_hyperperiod}
     \end{subfigure}
     \hfill
     \begin{subfigure}[b]{0.34\textwidth}
         \centering
         \includegraphics[width=\textwidth]{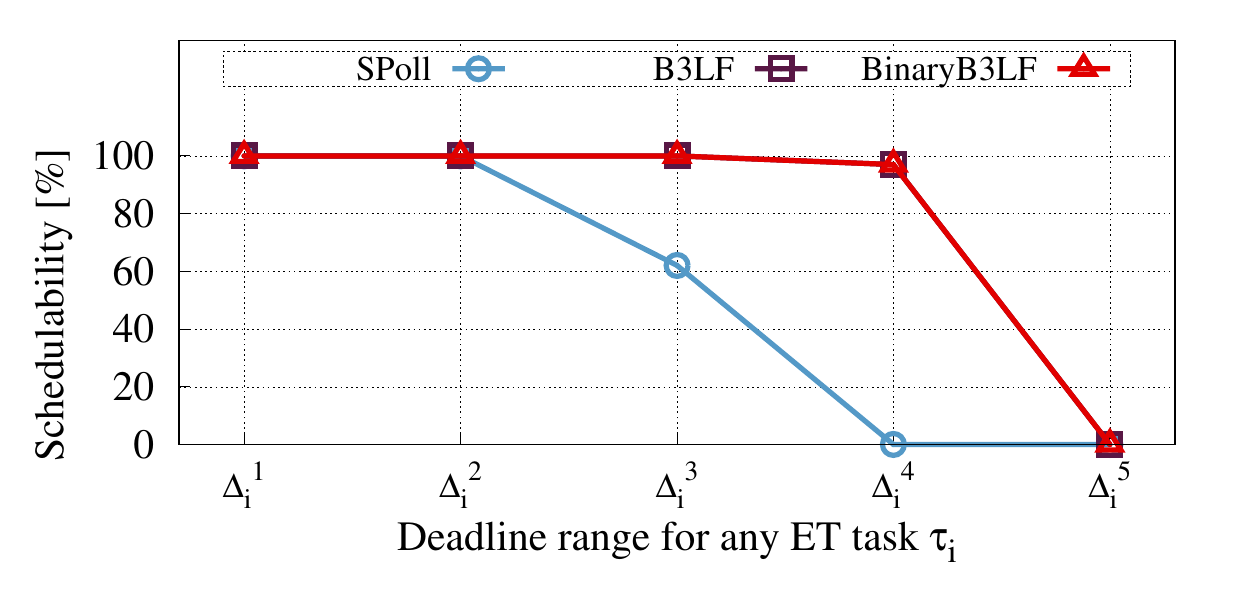}
         \caption{decreasing laxity}
         \label{fig:decreasing_laxity}
     \end{subfigure}
    \caption{Complexity and schedulability experiments for different problem dimensions.}
    \label{fig:complexity_exp}
\end{figure}

In Figure~\ref{fig:runtime_increasing_hyperperiod} we maintain the utilization setup from before and generate $1$ implicit deadline task set per test case with $8$ TT and $8$ ET tasks, increasing the hyperperiod of TT tasks $T$ exponentially from $120ms$ to $2784600ms$ ($\approx 46min$) on a timeline with a $10 \mu s$ microtick (logarithmic x-axis). The generation of the TT schedule dominates all other aspects when the hyperperiod explodes since all algorithms scale linearly in the number of time instants until their respective schedule cycles. Note that the schedule cycle is a function of the represented TT hyperperiod $T$, being either equal to it or a multiple thereof. For SPoll, the schedule cycle is the lcm of $T$ and the period(s) of the polling task(s), and for B3LF, it may be a multiple of $T$ (c.f. Algorithm~\ref{alg:schedTTglobal}) upper bounded by $\lfloor\frac{L_M}{I^{TT}}\rfloor T$. In B3LF, the computation of the maximum burst is independent of $T$. We note that even for an (unrealistically) large hyperperiod of~$\approx 46min$, B3LF manages to compute a schedule table in $27min$, which is quite acceptable for an offline schedule generation tool. 

In Figure~\ref{fig:decreasing_laxity} we maintain the setup from Figure~\ref{fig:runtime_increasing_tasks} except that each task set has $8$ TT and $8$ ET tasks. For each ET task $\tau_i \in \mathcal{T}^{ET}$, we choose the deadline randomly in each quintile of the interval $[C_i, T_i]$ in decreasing order (x-axis), i.e., $\Delta_i^k = [T_i - k(T_i-C_i)/5, T_i - (k-1)(T_i-C_i)/5], k = 1, \ldots, 5$, leading to an increasingly smaller laxity and hence making the task sets progressively harder to schedule. While for $\Delta_i^1$ and $\Delta_i^2$, the schedulability of all methods is at $100\%$, our method fares better than SPoll when the deadline of ET tasks gets more constrained for $\Delta_i^3$ and $\Delta_i^4$.  

AdvPoll is not applicable for ET tasks with arbitrary deadlines. For constrained-deadline tasks we leave the investigation of the schedulability and runtime of AdvPoll in relation to our methods for future work.

Finally, we note an interesting additional observation concerning the priority assignment of ET tasks. While we use a deadline-monotonic priority assignment (which is usual in practice), we noticed that our method's schedulability drops considerably with a random priority assignment. We found that this is due to Theorem~\ref{th:maxTTBurst}, since for fixed ET and TT task sets, when the priority $p$ decreases, $\lambda-U_{>p}$ and $-C_{\geqslant p}^{ET}$ decrease, so larger deadlines $D_j$ are needed to obtain a positive $b_{max}^{TT}$, which is not necessarily the case with random priorities.

\section{Conclusion and future work}
\label{sec:conclusion}
We have addressed the integration of sporadic event-triggered (ET) tasks with arbitrary deadlines into static time-triggered (TT) schedules via a novel method based on affine envelope approximations that distills a maximal burst and rate constraint for TT tasks such that ET tasks are schedulable. Using this affine function, we introduced an LLF-based algorithm for creating static schedule tables that respect the previously computed constraint and thereby fulfill both the temporal requirements of TT and ET tasks. We have also presented an extension that enables an efficient design optimization technique for iterative design processes where ET tasks are added or changed later. We have shown through a series of synthetic test cases that our method outperforms classical simple polling-based approaches both in terms of schedulability and runtime in most cases. 

Modern applications (e.g., in the automotive domain~\cite{McLeanETFA20, Sagstetter:2014:SIF:2593069.2593211}) are composed of multi-core multi-SoC platforms running tasks with complex dependencies (e.g., cause-effect chains~\cite{Becker:2017:ETA:3165725.3165925, 7579951}). In such distributed application scenarios, both the schedule generation and the task to core allocation are part of the scheduling challenge. In this paper, we focused on the schedule generation for individual cores and did not consider more complex dependencies between tasks. However, we note that our method has potential to be generalized and applied to distributed networked systems with complex dependencies between tasks. When generating the time-triggered schedule, our method effectively imposes a certain constraint on when slots for ET tasks must be inserted into the timeline (via the computed maximum burst). Hence, the task allocation and inter-dependence problems are, in essence, orthogonal to our method. While simple dependencies between time-triggered tasks can be readily integrated into our mLLF algorithm, adding other, more complex constraints (e.g., cause-effect chains) between tasks in distributed nodes is more challenging. We envision adding the maximum burst constraint as a special constraint on TT tasks in heuristic methods like~\cite{McLeanFRONTIERS22} and performing the maximum burst calculation for the different task-to-core allocations in swap moves of candidate solutions. Hence, we believe that our method is general enough to be applied to distributed systems with complex constraints among tasks, and we plan to investigate the integration in future work. 

\bibliographystyle{plainurl}
\bibliography{paper}

\end{document}